\newtheorem{theorem}{Theorem}
\newtheorem{definition}{Definition}
\newtheorem{lemma}{Lemma}
\newtheorem{corollary}{Corollary}
\begin{document}
\let\oldacl\addcontentsline
\renewcommand{\addcontentsline}[3]{}
\newcommand{\Tr}{\mathrm{Tr}}
\newcommand{\qi}[1]{\textcolor{blue}{(QI: #1)}}
\newcommand{\szx}[1]{\textcolor{red}{(SZX: #1)}}
\title{Entanglement-induced exponential advantage in amplitude estimation via state matrixization}
\author{Zhong-Xia Shang}
\affiliation{HK Institute of Quantum Science $\&$ Technology, The University of Hong Kong, Hong Kong, China}
\affiliation{QICI Quantum Information and Computation Initiative, Department of Computer Science,
The University of Hong Kong, Hong Kong, China}
\affiliation{Shanghai Research Center for Quantum Science and CAS Center for Excellence in Quantum Information and Quantum Physics,
University of Science and Technology of China, Shanghai 201315, China}
\author{Qi Zhao}
\affiliation{QICI Quantum Information and Computation Initiative, Department of Computer Science,
The University of Hong Kong, Hong Kong, China}

\begin{abstract}
Estimating quantum amplitude, or the overlap between two quantum states, is a fundamental task in quantum computing and underpins numerous quantum algorithms.
In this work, we introduce a novel algorithmic framework for quantum amplitude estimation by transforming pure states into their matrix forms (Matrixization) and encoding them into non-diagonal blocks of density operators and diagonal blocks of unitary operators. Utilizing the construction details of state preparation circuits, we systematically reconstruct amplitude estimation algorithms within the novel matrixization framework through a technique known as channel block encoding. Compared with the standard approach, amplitude estimation through matrixization can have a different complexity that depends on the entanglement properties of the two quantum states. Specifically, our new algorithm can have exponentially smaller gate complexity when one of the two quantum states is prepared by a linear-depth quantum circuit that is below maximal entanglement under a certain bi-partition and the other state is maximally entangled. We later generalize this result to broader regimes and discuss implications. Our results demonstrate that the near-optimal performance of the standard amplitude estimation algorithm can be surpassed in specific cases. 
\end{abstract}
\maketitle

\noindent\textit{\textbf{Introduction.—}}
Quantum computing is a critical field with the potential to achieve significant speed-ups for certain computational problems \cite{nielsen2010quantum}. Within this domain, the estimation of quantum amplitudes in the form of the inner product of two distinct quantum states, $\langle B|A\rangle$, 
holds considerable significance. In physics, the Loschmidt echo \cite{goussev2012loschmidt}, a class of quantum amplitudes, plays an important role in understanding quantum dynamics \cite{andraschko2014dynamical}, quantum chaos \cite{yan2020information}, and decoherence \cite{cucchietti2003decoherence}. In quantum computing, quantum amplitudes can encode solutions to classically intractable problems \cite{hangleiter2023computational,aaronson2011computational},  leveraging quadratic quantum speedup \cite{grover1996fast,brassard2002quantum} and forming the basis of quantum sampling advantage tasks \cite{hangleiter2023computational,arute2019quantum,zhong2020quantum}. They are also pivotal in various quantum machine learning tasks \cite{havlivcek2019supervised,lloyd2013quantum,schuld2019quantum}. 

Current quantum algorithms for amplitude estimation can be classified into two main categories. The Hadamard test \cite{aharonov2006polynomial} estimates quantum amplitudes with a standard quantum limit (s.q.l.) estimation ($\epsilon^{-2}$) with a query complexity of $\mathcal{O}(|\mu|^{-2}\epsilon^{-2}\log(\delta^{-1}))$. This method approximates $\mu=\langle B|A\rangle$ with a relative error $\epsilon$ and a success probability no less than $1-\delta$. The amplitude estimation algorithm \cite{brassard2002quantum} and its variants \cite{aaronson2020quantum,grinko2021iterative,rall2023amplitude,giurgica2022low} achieve a quadratic reduction to the Heisenberg limit (h.l.)($\epsilon^{-1}$) with a query complexity $\mathcal{O}(|\mu|^{-1}\epsilon^{-1}\log(\delta^{-1}))$. Here, the query complexity denotes the number of calls to the preparation oracles for $|A\rangle$ and $|B\rangle$. The Heisenberg scaling represents the optimal outcome allowed by the uncertainty principle inherent in the linearity of quantum mechanics
\cite{giovannetti2004quantum,bennett1997strengths,childs2016optimal,abrams1998nonlinear,wootters1982single}. Consequently, whenever $\mu$ is exponentially small, an exponential amount of cost is inevitably required to obtain an accurate estimation, underpinning the general belief that $NP \notin BQP$ \cite{bennett1997strengths,aaronson2005guest}.


The optimality of amplitude estimation is typically considered under the black-box assumption. In this work, we explore whether the complexity of estimating amplitudes can be further reduced if we have knowledge of the operational details of the state preparation unitary operators, such as their construction from elementary gates or engineered Hamiltonian evolutions. We give a surprisingly positive answer to the question. We propose a new algorithmic framework for estimating quantum amplitudes by transforming quantum states into matrices and encoding them into blocks of density matrices and unitary operators. Through our new approach, the complexity of estimating amplitudes can be further \textbf{exponentially} reduced beyond the standard amplitude estimation for certain regimes which interestingly has a strong connection with entanglement properties of the two quantum states $|A\rangle$ and $|B\rangle$.

\noindent\textit{\textbf{Problem setup and main results.—}}
We aim to estimate the real (imaginary) part of the amplitude $\mu=\langle B|A\rangle=\mu_r+i\mu_i$ up to a small relative error $\epsilon$: $|\overline{\mu_{r(i)}}-\mu_{r(i)}|\leq \epsilon |\mu_{r(i)}|$ with a success probability at least $1-\delta$. $|A\rangle$ and $|B\rangle$ are $2n$-qubit states, and we designate the first $n$-qubit subsystem as the upper subsystem (US) and the last $n$-qubit subsystem as the lower subsystem (LS). For $|A\rangle$, which is unknown, we have $|A\rangle=V|A_0\rangle$ with $|A_0\rangle$ an initial product state and $V$ a quantum circuit with known construction details from the elementary gate set $\{H,S,T,CNOT\}$. For $|B\rangle$, it is a known $2n$-qubit Bell basis state, which is a tensor product of 2-qubit Bell states on the pairs ${(1, n+1), (2, n+2), \ldots, (n, 2n)}$. 

In this work, we propose a new quantum algorithm to estimate such amplitudes with the following main theorem (also see Fig.~\ref{f1}a).
\begin{theorem}\label{cbec}
For the amplitude $\mu=\langle B|A\rangle=\mu_r+i\mu_i$, when the gate complexity in $V$ is $\tau$ with $K$ CNOT gates connecting US and LS, its estimation to a relative error $\epsilon$ with a success probability $1-\delta$ can be achieved with a gate (time) complexity
\begin{eqnarray}
T_{s}=\mathcal{O}(\tau 2^{K-n+2} |\mu_{r(i)}|^{-2}\epsilon^{-2} \log(\delta^{-1})),
\end{eqnarray}
for s.q.l. estimation and
\begin{eqnarray}
T_{h}=\mathcal{O}(\tau 2^{\frac{K-n}{2}+1}|\mu_{r(i)}|^{-1}\epsilon^{-1} \log(\delta^{-1})),
\end{eqnarray}
for h.l. estimation.
\end{theorem}
\noindent Compared with standard methods, we have the additional factor $2^{\frac{K-n}{2}+1}$, therefore, whenever $n-K=\Theta(n)$ which is true for $K=n/2 
 \text{(linear depth)}, \sqrt{n}$,  etc, our new algorithm can give \textbf{exponential} improvements. We will show later that this regime is classically hard in various aspects. Thus, our algorithm exponentially improves the amplitude estimation in a quantum advantage regime. Our result can also be generalized to other forms of $|B\rangle$ with milder improvements (see additional results and SM \ref{ape8}).

\noindent\textit{\textbf{Methods overview.—}}
The $2n$-qubit states $|A\rangle$ and $|B\rangle$ have the expressions $|A\rangle=\sum_{ij}\alpha_{ij}|i\rangle|j\rangle$ and $|B\rangle=\sum_{ij}\beta_{ij}|i\rangle|j\rangle$ with $|i\rangle$ in US and $|j\rangle$ in LS. We first turn them into $n$-qubit matrices $\mathcal{M}[|A\rangle]=A=\sum_{ij}\alpha_{ij}|i\rangle\langle j|$ and $\mathcal{M}[|B\rangle]=B=\sum_{ij}\beta_{ij}|i\rangle\langle j|$, a process we call the matrixization procedure. We will also call the inverse procedure $\mathcal{V}[\cdot]$ the vectorization procedure. We can also understand these two mappings from Choi–Jamiołkowski isomorphism \cite{choi1975completely} (See SM \ref{overview}). 

\begin{figure}[htbp]
\centering
\includegraphics[width=0.49\textwidth]{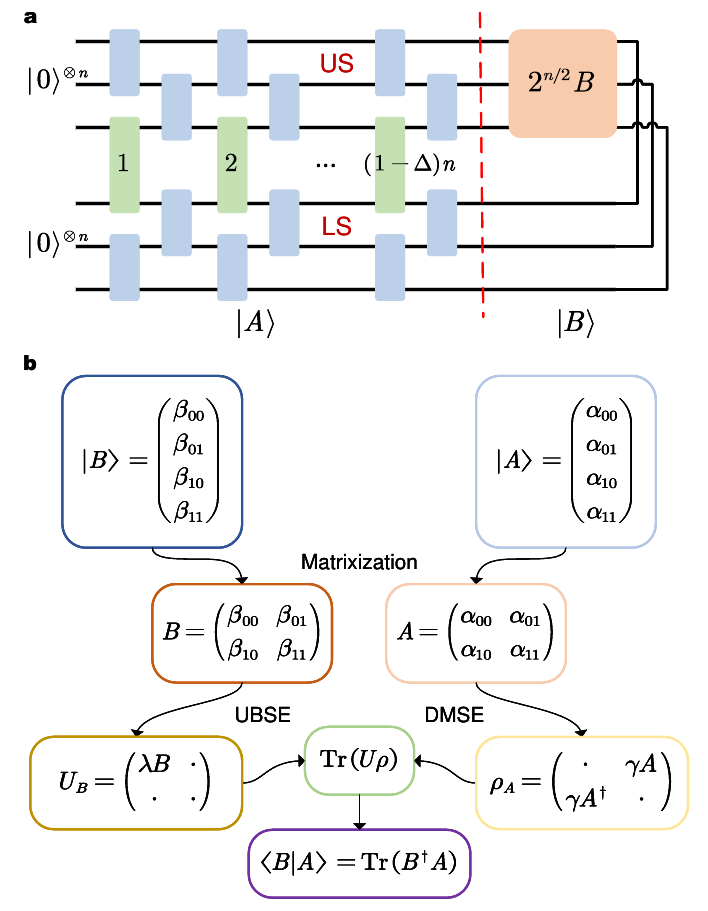}
\caption{(a): The amplitudes that the new algorithm can give exponential improvement (Theorem \ref{cbec}). $(1-\Delta)n$ is to denote $n-K=\Theta(n)$. (b): Sketch of the algorithm. Matrices $A$ and $B$ are matrixizations of $|A\rangle$ and $|B\rangle$. We encode $A$ into a non-diagonal block of a density matrix $\rho_A$ by DMSE and use UBSE to let the unitary operator $U_B$ be a block encoding of $B$. In this way, we can further use the Hadamard test or the amplitude estimation to evaluate values of the form $\Tr(U\rho)$ to estimate the amplitude $\langle B|A\rangle=\Tr(B^\dag A)$.\label{f1}}
\end{figure}

When $|A\rangle$ and $|B\rangle$ are turned into matrices, we have $\langle B|A\rangle=\Tr(B^\dag A)$. Thus, we can find quantum objects to encode $A$ and $B$ such that the value of $\Tr(B^\dagger A)$ can be estimated. We propose unitary block state encoding (UBSE) and density matrix state encoding (DMSE):
\begin{definition}[Unitary block state encoding (UBSE)]
Given an $n+k$-qubit unitary operator $U_B$, if $U_B$ satisfies
\begin{eqnarray}
\mathcal{V}[(I_n\otimes \langle 0|_k ) U_B (I_n\otimes |0\rangle_k)]=\lambda |B\rangle,
\end{eqnarray}
then $U_B$ is called an $(n+k,\lambda)$-UBSE of $|B\rangle$.
\end{definition}
\noindent In other words, $U_B$ is a block encoding \cite{low2019hamiltonian} of $\mathcal{M}[|B\rangle]$. For the special forms of $|B\rangle$ we considered, $B$ can be exponentially amplified ($\lambda=2^{n/2}$) and encoded into a Pauli operator $U_B=2^{n/2} B$, which is the key resource of the exponential improvements of our algorithm (See SM \ref{bbss}). Meanwhile, DMSE can defined as:
\begin{definition}[Density matrix state encoding (DMSE)]
Given an $l+n$-qubit density matrix $\rho_A$, if $\rho_A$ satisfies
\begin{eqnarray}
\mathcal{V}[(\langle s_1|_l\otimes I_n) \rho_A (|s_2\rangle_l\otimes I_n)]=\gamma |A\rangle,
\end{eqnarray}
where $|s_1\rangle_l$ and $|s_2\rangle_l$ are two orthogonal computational basis states of the $l$-qubit ancilla system, then $\rho_A$ is called an $(l+n,s_1,s_2,\gamma)$-NDSE of $|A\rangle$.
\end{definition}
\noindent The non-diagonal block of density matrices can get rid of the Hermitian and positive semi-definite restrictions of density matrices. In this work, we will mainly focus on $(1+n,0,1,\gamma)$-DMSE where $\rho_A$ has the form:
$\rho_A=\begin{pmatrix}
\cdot & \gamma A \\
\gamma A^\dag & \cdot
\end{pmatrix}$. Note that $\rho_A$ is also a $(1+n,1,0,\gamma)$-DMSE of $|A^\dag\rangle$. 

An important thing is how to prepare $\rho_A$ from the construction information of $V$. A reasonable workflow is to reconstruct $V$ under the matrixization picture and act it on the DMSE of $|A_0\rangle$ defined as $\rho_{A_0}$ (with $\gamma_0$) to prepare $\rho_A$. We propose a new technique called channel block encoding (CBE). We consider applying a quantum channel $\mathcal{C}[\cdot]$ to $\rho_{A_0}$ with the form
\begin{equation}\label{cccnnn}
\mathcal{C}[\rho_{A_0}]=\sum_i \begin{pmatrix}
K_i & 0\\
0& L_i
\end{pmatrix}\rho_{A_0} \begin{pmatrix}
K_i^\dag & 0\\
0& L_i^\dag
\end{pmatrix},
\end{equation}
where $\{K_i\}$ and $\{L_i\}$ are two sets of Kraus operators satisfying $\sum_i K_i^\dag K_i=\sum_i L_i^\dag L_i=I_n$. Focusing on the upper right block $\rho_{A_0}$, we equivalently build the operation in the vectorization picture, resulting in CBE
\begin{eqnarray}
\gamma_0 |A_0\rangle\rightarrow \gamma_0\sum_i K_i\otimes L_i^*  |A_0\rangle.
\end{eqnarray}
\begin{definition}[Channel block encoding (CBE)]
The quantum channel $\mathcal{C}[\cdot]$ is $\eta$-CBE of $O$ if 
\begin{eqnarray}
\sum_i K_i\otimes L_i^*=\eta O.
\end{eqnarray}
\end{definition}
\noindent It can be proved that $\sum_i K_i\otimes L_i^*$ is universal to encode arbitrary operators (see SM \ref{ape5}).

Now for each gate in $V$, we can construct a corresponding CBE and the composite of these channels becomes the CBE $\mathcal{C}_V[\cdot]$ of $V$. Therefore, $\tau$ gates in $V=V_\tau...V_1$ also means $\tau$ channels in $\mathcal{C}_V[\cdot]=\mathcal{C}_{V_\tau}[...\mathcal{C}_{V_1}[\cdot]]$, resulting the same gate complexity. Here, we ignore the complexity inside each channel since a local gate leads to a local CBE channel such that the inside overhead won't change the basic results. If $\mathcal{C}_{V_i}[\cdot]$ is $\eta_i$-CBE of $V_i$, then $\mathcal{C}_V[\cdot]$ is $\eta_V=\eta_\tau...\eta_1$-CBE of $V$. Concrete construction of CBE for elementary gates can be found in SM \ref{ape6}.

\noindent\textit{\textbf{Complexity.—}}
Fig.~\ref{f1}b summarizes the basic framework of the new algorithm. Based on UBSE and DMSE, the value of $\Tr(B^\dag A)$ now can be encoded into forms like $\Tr(U\rho)$ which can be estimated through Hadamard test and amplitude estimation. We give concrete implementation details in SM \ref{ape2} (Fig.~\ref{f2}). The results can be summarized into the following theorem
\begin{theorem}\label{t1a}
Given access to $U_B$ and copies of $\rho_A$, we can estimate $\mu_{r}=\text{Re}[\langle B|A\rangle]$ and $\mu_{i}=\text{Im}[\langle B|A\rangle]$ by doing the Hadamard test (s.q.l.). The query complexity on $U_B$ (including its reverse) and $\rho_A$ to achieve a relative error $\epsilon$ with a success probability at least $1-\delta$ is
\begin{eqnarray}\label{ttts}
T_{q,s}=\mathcal{O}(\gamma^{-2}\lambda^{-2} |\mu_{r(i)}|^{-2}\epsilon^{-2} \log(\delta^{-1})).
\end{eqnarray}
If we are given additional access to $U_{SA}$, the circuit to prepare state $|S_A\rangle$ as the purification $\rho_A$, the amplitude estimation algorithm (h.l.) can be used whose query complexity on $U_B$ and $U_{SA}$ (including there reverse) is
\begin{eqnarray}\label{ttth}
T_{q,h}=\mathcal{O}(\gamma^{-1}\lambda^{-1} |\mu_{r(i)}|^{-1}\epsilon^{-1} \log(\delta^{-1})).
\end{eqnarray}
\end{theorem}
\noindent In the theorem, we use modified versions of amplitude estimation \cite{aaronson2020quantum,grinko2021iterative} avoiding the use of quantum Fourier transform. $U_{SA}$ is the purification circuit of $\mathcal{C}_V[\cdot]$ \cite{preskill1998lecture}. 

Compared with the complexity of queries to the preparation circuits of $|A\rangle$ and $|B\rangle$ in standard approaches, the complexity of queries to $\rho_A$ ($|S_A\rangle$) and $U_B$ in our approach can have $\gamma\lambda$ level of reduction. Importantly, this improvement also holds for the total gate (time) complexity. For $|B\rangle$, both its preparation circuit and the UBSE Pauli operator $U_B$ can be easily prepared. For $|A\rangle$, $V$ and $\mathcal{C}_V[\cdot]$ have the same gate complexity. Thus, the overall improvement of our algorithm directly depends on the value of $\gamma\lambda$.

\noindent\textit{\textbf{Entanglement decides improvements.—}}
We first discuss $U_B$. To have large improvements, we desire a large value of $\lambda$, whose upper bound has the relation with the entanglement of $|B\rangle$ under US-LS bi-partition.
\begin{theorem}[Upper bound on $\lambda$]\label{ubl}
Given a state $|B\rangle$, the value of $\lambda$ in $U_B$ has the upper bound 
\begin{eqnarray}
\lambda\leq \frac{1}{\|B\|_\infty}=2^{\frac{H_\infty(|B\rangle)}{2}},
\end{eqnarray}
where $\|\cdot\|_\infty$ is the spectral norm (largest singular values) and $H_\infty(|B\rangle)=-\log_2(\max_i \beta_i^2)$ is the $\infty$-Rényi entropy \cite{bromiley2004shannon} of $|B\rangle$ under US-LS bi-partition.
\end{theorem}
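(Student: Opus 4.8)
The plan is to exploit the single constraint that defines a UBSE together with the unitarity of $U_B$, and then translate the resulting operator-norm bound into a statement about the Schmidt spectrum of $|B\rangle$. Unwinding the definition of a $(n+k,\lambda)$-UBSE, the condition $\mathcal{V}[(I_n\otimes\langle0|_k)U_B(I_n\otimes|0\rangle_k)]=\lambda|B\rangle$ is equivalent, after applying $\mathcal{M}$ to both sides, to the operator identity $(I_n\otimes\langle0|_k)U_B(I_n\otimes|0\rangle_k)=\lambda B$; that is, $U_B$ is an exact block encoding of $\lambda B$ sitting in its top-left $2^n\times2^n$ corner.

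First I would bound the norm of this corner. The maps $I_n\otimes|0\rangle_k$ and $I_n\otimes\langle0|_k$ form an isometry and its adjoint, so for any operator $M$ one has $\|(I_n\otimes\langle0|_k)M(I_n\otimes|0\rangle_k)\|_\infty\leq\|M\|_\infty$. Applying this with $M=U_B$ and using $\|U_B\|_\infty=1$ gives $\lambda\|B\|_\infty=\|(I_n\otimes\langle0|_k)U_B(I_n\otimes|0\rangle_k)\|_\infty\leq1$, and since $\|B\|_\infty>0$ for any nonzero state this yields the first inequality $\lambda\leq1/\|B\|_\infty$.

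Next I would identify $\|B\|_\infty$ with the largest Schmidt coefficient of $|B\rangle$. Writing the singular value decomposition of the coefficient matrix $(\beta_{ij})$ as $\beta_{ij}=\sum_r \sigma_r u_{ir} v_{jr}^*$ with $\sigma_r\geq0$, matrixization gives $B=\sum_r\sigma_r|u_r\rangle\langle v_r^*|$, while vectorizing back produces $|B\rangle=\sum_r\sigma_r|u_r\rangle|v_r^*\rangle$, which is exactly a Schmidt decomposition across the US/LS bipartition. Hence the singular values $\{\sigma_r\}$ of $B$ coincide with the Schmidt coefficients $\{\beta_i\}$ of $|B\rangle$, so $\|B\|_\infty=\max_i\beta_i$. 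Combining this with the definition $H_\infty(|B\rangle)=-\log_2(\max_i\beta_i^2)$ gives $1/\|B\|_\infty=(\max_i\beta_i)^{-1}=(\max_i\beta_i^2)^{-1/2}=2^{H_\infty(|B\rangle)/2}$, which is the claimed bound.

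The routine parts are the two displayed inequalities; I expect the only point requiring genuine care to be the correspondence between the SVD of $B$ and the Schmidt decomposition of $|B\rangle$, in particular tracking the complex conjugation on the LS factor so that the nonnegative singular values are matched correctly to nonnegative Schmidt coefficients. One should also note that the result is phrased as an upper bound because saturating it requires a block encoding whose corner attains norm exactly $1$; exhibiting such a $U_B$ (the content of the following subsection) is what makes $\lambda=1/\|B\|_\infty$ achievable.
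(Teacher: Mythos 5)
Your proof is correct and takes essentially the same route as the paper, whose appendix merely asserts the bound ``due to the restrictions of block encoding'' (unitarity forces the corner block $\lambda B$ to have spectral norm at most $1$) combined with the equivalence between the SVD of $B$ and the Schmidt decomposition of $|B\rangle$ --- precisely the two steps you make explicit. (A notational quibble only: matrixization gives $B=\sum_r\sigma_r|u_r\rangle\langle v_r|$, with the conjugation appearing on the vectorized factor $|v_r^*\rangle$ rather than on the bra, but since conjugation preserves orthonormality this does not affect the identification of singular values with Schmidt coefficients.)
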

\noindent This theorem comes from the fact that the spectra norm of unitary operators is 1. $|B\rangle$ with larger entanglement has a larger achievable $\lambda$. When $|B\rangle$ is a maximally entangled state, $B$ will be an un-normalized unitary operator and this unitary operator encodes $B$ with the largest $\lambda=2^{n/2}$. On the contrary,
when $|B\rangle = |\psi_1\rangle|\psi_2\rangle$ is a product state, $\lambda$ has the smallest value $1$. While the upper bound is not necessarily achieved, for the special $|B\rangle$ we considered in this work, $U_B$ is exactly a known Pauli operator with the largest $\lambda=2^{n/2}$. For other $|B\rangle$, see SM \ref{ape8}.

For $\rho_A$, $\gamma$ has an upper bound related to the entanglement of $|A\rangle$.
\begin{theorem}[Upper bound on $\gamma$]\label{ubga}
Given a state $|A\rangle$, the value of $\gamma$ in $\rho_A$ has the upper bound 
\begin{eqnarray}
\gamma \leq \frac{1}{2 \|A\|_{1}}= 2^{-\frac{H_{1/2}(|A\rangle)}{2}-1}, 
\end{eqnarray}
where $\|\cdot\|_1$ is the trace norm (sum of singular values) and $H_{1/2}(|A\rangle):=2\log_2(\sum_i \alpha_i)$ is the $1/2$-Rényi entropy of $|A\rangle$ under US-LS bi-partition.
\end{theorem}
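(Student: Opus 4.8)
The plan is to use only the two defining properties of a density matrix — positive semidefiniteness and unit trace — together with the block structure dictated by the DMSE definition. By the definition of a $(1+n,0,1,\gamma)$-DMSE I would write $\rho_A$ in its $2\times 2$ ancilla-block form
\begin{equation}
\rho_A=\begin{pmatrix} P & \gamma A \\ \gamma A^\dagger & Q \end{pmatrix},\qquad P,Q\succeq 0,\quad \Tr(P)+\Tr(Q)=1,
\end{equation}
where $P,Q$ are the (a priori unknown) diagonal blocks. Since the claim is an upper bound over all admissible DMSEs, it suffices to control $\gamma$ using nothing but $\rho_A\succeq0$ and $\Tr(\rho_A)=1$. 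I would also first record the identity $\|A\|_1=\sum_i\alpha_i$: the Schmidt coefficients $\alpha_i$ of $|A\rangle$ across the US/LS cut are exactly the singular values of $A=\mathcal{M}[|A\rangle]$, since the matrixization is the Choi--Jamio{\l}kowski map under which the Schmidt decomposition of $|A\rangle$ becomes the SVD of $A$. This gives $2^{-H_{1/2}(|A\rangle)/2-1}=2^{-\log_2\|A\|_1-1}=\tfrac{1}{2\|A\|_1}$, so the two stated forms of the bound coincide and it remains only to show $\gamma\|A\|_1\le \tfrac12$.

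The core step is a trace-norm domination of the off-diagonal block of a positive block matrix. I would invoke the standard factorization: whenever $\rho_A\succeq0$, its off-diagonal block factors as $\gamma A=P^{1/2}KQ^{1/2}$ for some contraction $K$ with $\|K\|_\infty\le1$. Applying the Schatten--H\"older inequality $\|XY\|_1\le\|X\|_2\|Y\|_2$ with $X=P^{1/2}$ and $Y=KQ^{1/2}$, and using $K^\dagger K\preceq I$, yields
\begin{equation}
\gamma\|A\|_1=\|P^{1/2}KQ^{1/2}\|_1\le\|P^{1/2}\|_2\,\|KQ^{1/2}\|_2=\sqrt{\Tr(P)}\,\sqrt{\Tr(K^\dagger K Q)}\le\sqrt{\Tr(P)\,\Tr(Q)}.
\end{equation}
Closing with the AM--GM inequality $\sqrt{\Tr(P)\,\Tr(Q)}\le\tfrac12(\Tr(P)+\Tr(Q))=\tfrac12$ then gives $\gamma\|A\|_1\le\tfrac12$, i.e.\ $\gamma\le\tfrac{1}{2\|A\|_1}$, which is the asserted bound.

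I expect the main obstacle to be the first equality of the second display: establishing the factorization $\gamma A=P^{1/2}KQ^{1/2}$ with a genuine contraction $K$. This is exactly where positive semidefiniteness is used (essentially a Schur-complement / generalized-Cholesky argument), and one must be careful that the subsequent Cauchy--Schwarz step leans on $\|K\|_\infty\le1$ rather than an unjustified norm estimate; everything after it is elementary. If desired, I would add that the bound is tight, being saturated by the balanced purification with $\Tr(P)=\Tr(Q)=\tfrac12$ and $K$ taken to be the partial isometry appearing in the SVD of $A$.
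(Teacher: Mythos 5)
Your proof is correct, and it takes a genuinely different route from the paper's. The paper first rotates $\rho_A$ by the block-diagonal unitary $W_A\oplus V_A$ built from the SVD $A=W_A^\dag\Sigma_A V_A$, so that the off-diagonal block becomes the diagonal matrix $\gamma\Sigma_A$; positivity is then used in its most elementary form, via the $2\times 2$ principal submatrices indexed by $\{0i,1i\}$, giving $\gamma\alpha_i\le\sqrt{p_{0i}p_{1i}}$ entrywise, after which summing and AM--GM yield $\gamma\|A\|_1\le\frac12$. You instead work coordinate-free: you keep the blocks $P,Q$ abstract, invoke the contraction factorization $\gamma A=P^{1/2}KQ^{1/2}$ with $\|K\|_\infty\le 1$ (valid for any positive semidefinite block matrix), and close with Schatten--H\"older $\|XY\|_1\le\|X\|_2\|Y\|_2$ plus AM--GM. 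Both arguments rest on exactly the same two facts (positivity and unit trace), and both end at AM--GM, but the mechanisms differ: the paper's version needs only $2\times 2$ determinant positivity and, as a bonus, hands you the saturating state for free --- in the rotated basis one simply takes all four blocks equal to $\gamma\Sigma_A$, which is manifestly a legal density matrix when $\gamma=\frac{1}{2\|A\|_1}$; your version avoids the explicit SVD basis change and transfers verbatim to a general $(l+n,s_1,s_2,\gamma)$-DMSE (compress $\rho_A$ onto the ancilla indices $s_1,s_2$ and note $\Tr(P)+\Tr(Q)\le 1$), at the cost of importing the factorization lemma, which itself requires a Schur-complement argument --- precisely the obstacle you flagged. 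Your closing tightness remark is also right: with $\Tr(P)=\Tr(Q)=\frac12$ and $K=W_A^\dag V_A$ one recovers the paper's optimal construction, so the two proofs meet at the same extremal $\rho_A$.
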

\noindent This theorem comes from the fact that $\rho_A$ is a density matrix (See SM \ref{ape4}). Contrary to $\lambda$, this theorem implies that $|A\rangle$ with larger entanglement has a smaller achievable $\gamma$. For example, when $|A\rangle=|\psi_1\rangle|\psi_2\rangle$ is a product state, we can let $\rho_A=1/2(|0\rangle|\psi_1\rangle+|1\rangle|\psi_2^*\rangle)(\langle 0|\langle\psi_1|+\langle 1|\langle \psi_2^*|)$ to obtain the largest $\gamma=1/2$. When $A=2^{-n/2}I$ corresponds to $|A\rangle$ a maximally entangled state, we can let $\rho_A=2^{-n}|+\rangle\langle +|\otimes I$ such that $\gamma=2^{-n/2-1}$ is the smallest upper bound. In SM \ref{ape4}, we also give the optimal construction of $\rho_A$ saturating the $\gamma$ upper bound.

Since we use $\mathcal{C}_V[\rho_{A_0}]$ to prepare $\rho_A$, we have $\gamma=\eta_V\gamma_0$ which is not necessarily close to the upper bound. Since both $\gamma$ and $\gamma_0$ have upper bounds (Theorem \ref{ubga}), there is also an upper bound for $\eta_v$ by the following corollary.
\begin{corollary}[Upper bound on $\eta_v$ of $V$]\label{coro1}
Given a unitary $V$, the value of $\eta_v$ in CBE of $V$ has the upper bound $\eta_v\leq2^{-\frac{H_{1/2}(V)}{2}}$, where $H_{1/2}(V)$ is defined as $H_{1/2}(V):=\sup_{|\psi\rangle}|H_{1/2}(V|\psi\rangle)-H_{1/2}(|\psi\rangle)|$.
\end{corollary}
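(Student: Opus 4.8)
The plan is to obtain the bound purely from the tight upper bound on the DMSE efficiency (Theorem \ref{ubga}) combined with the multiplicativity $\gamma=\eta_v\gamma_0$ stated above, exploiting the fact that we are free to feed the CBE of $V$ with whatever input state we like. Throughout, write $\gamma_{\max}(|\psi\rangle):=2^{-H_{1/2}(|\psi\rangle)/2-1}$ for the largest efficiency of a DMSE of $|\psi\rangle$; by Theorem \ref{ubga} this is an upper bound on any admissible $\gamma$, and by the optimal construction of Appendix \ref{ape4} it is actually attained, so it may be used as an equality for the input encoding.

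First I would fix an arbitrary $2n$-qubit input $|A_0\rangle$ and prepare its \emph{optimal} DMSE, so that $\gamma_0=\gamma_{\max}(|A_0\rangle)$. Applying the CBE of $V$ then yields a DMSE of $|A\rangle=V|A_0\rangle$ with efficiency $\gamma=\eta_v\gamma_0$. Since every DMSE of $|A\rangle$ must obey $\gamma\le\gamma_{\max}(V|A_0\rangle)$, this forces $\eta_v\,\gamma_{\max}(|A_0\rangle)\le\gamma_{\max}(V|A_0\rangle)$, i.e.
\begin{equation}
\eta_v\le\frac{\gamma_{\max}(V|A_0\rangle)}{\gamma_{\max}(|A_0\rangle)}=2^{-\frac12\left(H_{1/2}(V|A_0\rangle)-H_{1/2}(|A_0\rangle)\right)} .
\end{equation}
As this holds for every input, taking the infimum over $|A_0\rangle$ — that is, choosing the state whose entanglement $V$ increases the most — gives
\begin{equation}
\eta_v\le 2^{-\frac12\,\sup_{|\psi\rangle}\left(H_{1/2}(V|\psi\rangle)-H_{1/2}(|\psi\rangle)\right)} .
\end{equation}
The same estimate can be read off directly from $\eta_v\,\mathcal{M}[V|\psi\rangle]=\sum_i K_i\,\mathcal{M}[|\psi\rangle]\,L_i^\dag$ together with the trace-norm contraction $\|\sum_i K_i X L_i^\dag\|_1\le\|X\|_1$ (valid because $\sum_i K_i^\dag K_i=\sum_i L_i^\dag L_i=I_n$) and the identity $\|\mathcal{M}[|\psi\rangle]\|_1=2^{H_{1/2}(|\psi\rangle)/2}$, which I would use as a cross-check.

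The hard part will be upgrading the one-sided exponent $\sup_\psi(H_{1/2}(V|\psi\rangle)-H_{1/2}(|\psi\rangle))$ to the symmetric quantity $\sup_\psi|H_{1/2}(V|\psi\rangle)-H_{1/2}(|\psi\rangle)|=H_{1/2}(V)$ in the statement: the argument above only controls the entanglement-generating direction, and inputs whose entanglement $V$ decreases yield only vacuous bounds (correspondingly, the trace-norm contraction does not run backwards, since a block-diagonal CBE channel need not be unital). To recover the absolute value I would run the identical chain for $V^\dag$, which is legitimate because $|A\rangle$ is prepared by a circuit or a Hamiltonian evolution whose inverse is equally accessible; a change of variables $|\psi\rangle\mapsto V^\dag|\psi\rangle$ turns its exponent into $\sup_\psi(H_{1/2}(|\psi\rangle)-H_{1/2}(V|\psi\rangle))$, and combining the two bounds via $\max\{\sup f,\sup(-f)\}=\sup|f|$ with $f(\psi)=H_{1/2}(V|\psi\rangle)-H_{1/2}(|\psi\rangle)$ collapses them to the claimed $2^{-H_{1/2}(V)/2}$. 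The delicate step I would scrutinize is precisely this passage to $V^\dag$: one must argue that the best achievable efficiency is symmetric under $V\leftrightarrow V^\dag$ (e.g.\ by exhibiting a CBE of $V^\dag$ of equal efficiency, or by showing the optimal CBE can be taken effectively unital) rather than assume that a single fixed channel supplies both directions.
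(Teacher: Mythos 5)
Your main argument is exactly the paper's own proof: Appendix \ref{ape5} derives Corollary \ref{coro1} precisely by combining the multiplicativity $\gamma=\eta_v\gamma_0$ with the bound of Theorem \ref{ubga} applied to both input and output and optimizing over inputs, compressed there into a single sentence (``it is thus simple to give an upper bound for $\eta_v$''). Your fleshed-out version, including the appeal to the achievability construction of Appendix \ref{ape4} so that $\gamma_0$ may be taken equal to its upper bound, is correct, and your cross-check via $\eta_v\,\mathcal{M}[V|\psi\rangle]=\sum_i K_i\,\mathcal{M}[|\psi\rangle]\,L_i^\dag$ together with the trace-norm contraction of a corner of a CPTP map is a sound and actually more self-contained route to the same inequality than anything written in the paper.

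The subtlety you flag at the end is genuine, and you should know the paper never resolves it: its one-sentence proof, like your argument, establishes only the one-sided bound $\eta_v\leq 2^{-\frac{1}{2}\sup_\psi\left(H_{1/2}(V|\psi\rangle)-H_{1/2}(|\psi\rangle)\right)}$, which is strictly weaker than the stated $2^{-H_{1/2}(V)/2}$ whenever the disentangling power of $V$ (i.e.\ the entangling power of $V^\dag$) exceeds its entangling power; the absolute value in the definition of $H_{1/2}(V)$ is simply not addressed. Two observations partially close the gap. First, your proposed $V\leftrightarrow V^\dag$ symmetry does hold for block-unital CBE channels: replacing each Kraus pair $(K_i,L_i)$ by $(K_i^\dag,L_i^\dag)$ gives $\sum_i K_i^\dag\otimes L_i^{T}=\eta_v V^\dag$, and the trace-preservation condition for the new channel is exactly unitality of the old one, so every UCBE of the form Eq.~\eqref{ucbe} --- in particular every construction the paper actually uses, Eq.~\eqref{2cbecc} --- admits a CBE of $V^\dag$ of equal efficiency, and the symmetric bound follows for that class. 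For general non-unital CBEs the symmetric bound remains unproven, and since entangling and disentangling power of bipartite unitaries are known to differ in general (at least for the von Neumann entropy), it is not self-evidently true. Second, the only downstream use of the corollary, the optimality claim of Theorem \ref{cbe2}, needs only the one-sided bound, because the Schmidt quantity $H_{1/2,s}(G)$ is attained by an entanglement-increasing (product) input; the dagger symmetry for two-qubit gates then follows a posteriori from that theorem rather than being an ingredient of it. So your proposal proves as much as the paper does, and correctly isolates the step that the paper glosses over.
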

\noindent 
Here, $\sup_{|\psi\rangle}$ in $H_{1/2}(V)$ can be chosen allowing ancillary system and initial entanglement \cite{nielsen2003quantum}. Therefore, the upper bound of $\eta_v$ is related to the power of entanglement generation of $V$ under US-LS bi-partition (See SM \ref{ape5}-\ref{ape6}).

\noindent\textit{\textbf{Optimal CBE for gates.—}}
We can construct optimal CBE for quantum circuits in terms of individual single-qubit/two-qubit gates. When a gate $G$ has no interaction between US and LS, its CBE is trivial with $\eta_G=1$, thus, we only need to consider those interaction gates. To achieve the optimal construction for interaction terms, we can use the following channel $\mathcal{C}_G[\cdot]$
\begin{eqnarray}\label{2cbecc}
\mathcal{C}_G[\rho]&&=\frac{s_i}{s_i+s_x+s_y+s_z}\begin{pmatrix}
e^{i\phi_i}I & 0\\
0& I
\end{pmatrix}\rho \begin{pmatrix}
e^{-i\phi_i}I & 0\\
0& I
\end{pmatrix}\nonumber\\&&+\frac{s_x}{s_i+s_x+s_y+s_z}\begin{pmatrix}
e^{i\phi_x}X & 0\\
0& X
\end{pmatrix}\rho \begin{pmatrix}
e^{-i\phi_x}X & 0\\
0& X
\end{pmatrix}\nonumber\\&&+\frac{s_y}{s_i+s_x+s_y+s_z}\begin{pmatrix}
e^{i\phi_y}Y & 0\\
0& -Y
\end{pmatrix}\rho \begin{pmatrix}
e^{-i\phi_y}Y & 0\\
0& -Y
\end{pmatrix}\nonumber\\&&+\frac{s_z}{s_i+s_x+s_y+s_z}\begin{pmatrix}
e^{i\phi_z}Z & 0\\
0& Z
\end{pmatrix}\rho \begin{pmatrix}
e^{-i\phi_z}Z & 0\\
0& Z
\end{pmatrix},
\end{eqnarray}
to encode any 2-qubit unitary operators $G$ connecting US and LS that have the canonical form \cite{tyson2003operator}
\begin{eqnarray}
&&G=e^{i(\theta_x X\otimes X+\theta_y Y\otimes Y+\theta_z Z\otimes Z)}=s_i e^{i\phi_i}I\otimes I+\nonumber\\&&s_x e^{i\phi_x}X\otimes X+s_y e^{i\phi_y}Y\otimes Y+s_z e^{i\phi_z}Z\otimes Z,
\end{eqnarray}
with $\eta_G=1/(s_i+s_x+s_y+s_z)$. Here, we adjust the phases so that $s_i$, $s_x$, $s_y$, and $s_z$ are positive numbers. $\eta_G$ can be proved to achieve the upper bound in Corollary \ref{coro1} (see SM \ref{ape6}). 

Since any two-qubit gates are locally equivalent to the canonical form, their CBE constructions based on Eq. \eqref{2cbecc} are also optimal. We prove the CNOT gate has an optimal CBE efficiency $1/\sqrt{2}$ which means $K$ CNOT gates in $V$ leads to $\eta_V=2^{-K/2}$. Since we can start from $\gamma_0=1/2$ and $\lambda=2^{n/2}$, therefore, we have $\gamma\lambda=2^{(n-K)/2-1}$ and results in Theorem \ref{cbec}. We emphasize that in practice, one can group gates to optimize $\eta_V$. For example, a SWAP gate has its optimal CBE encoding efficiency $1/2$, but if one trivially uses 3 CBE channels of CNOT to build the SWAP gate, the efficiency will be only $2^{-3/2}$. 

\noindent\textit{\textbf{Classical hardness discussion.—}}
The classical hardness can be seen from three aspects (See SM \ref{ooru}-\ref{bbbbbbbb} for more detailed discussions). 

The hardness of $|A\rangle$ inside US and LS: when $|A\rangle=|A_u\rangle|A_u^*\rangle$ and $|B\rangle$ a Bell basis state, we have $\langle B|A\rangle=2^{-n/2}\langle A_u|B|A_u\rangle$. It is known that estimating the Pauli expectation value up to a (polynomial) small additive error is BQP-complete \cite{aharonov2017interactive,janzing2005ergodic}. When $|A\rangle = |A_u\rangle |A_u\rangle$, in Ref. \cite{hangleiter2024bell}, the authors propose the Bell sampling protocol and show that estimating $\langle B|A\rangle$ to a small relative error corresponds to estimating the values of GapP functions to a small relative error, which is GapP-hard \cite{hangleiter2023computational}. For both BQP and GapP amplitudes, our algorithm can give the largest $2^{n/2-1}$ improvement. When the original amplitude is of order $\mathcal{O}(2^{-n/2})$, our algorithm can reduce the overall complexity from exponential to polynomial, which is true for BQP amplitudes. However, for GapP amplitudes, the hardness implies they should be exponentially smaller than $\mathcal{O}(2^{-n/2})$ such that the resulting complexity of our method is still exponential, which coincides with the anti-concentration effect \cite{dalzell2022random,hangleiter2023computational} making almost all amplitudes around $\mathcal{O}(2^{-n})$. Otherwise, quantum computers could solve problems across the entire polynomial hierarchy \cite{toda1991pp}.

The hardness of $|A\rangle$ between US and LS: the CNOT gate restriction across US-LS allows an exponential large bond dimension ($\exp(n)$)), which combining no restrictions on superposition and magic, makes no known efficient classical algorithms \cite{berne1986simulation,gottesman1998heisenberg,cirac2021matrix}. The quantum circuit-cutting methods \cite{cirac2021matrix,peng2020simulating} are also not efficient here indicating that this is a genuine $2n$-qubit regime that cannot be reduced to $n$-qubit quantum systems as in the cases shown above where $|A\rangle$ is a product state. In another view of points, such depth is already enough for the emergence of various interesting things such as the approximate unitary design \cite{schuster2024random}, pseudorandom unitaries \cite{schuster2024random}, and anti-concentration \cite{dalzell2022random}.

The hardness of $|B\rangle$: $|B\rangle$ can go beyond Bell basis states. As long as it is maximally entangled, $B$ is proportional to a unitary operator $U_B$ which can encode universal quantum computing. We can also consider the random behaviors. We hope $|B\rangle$ has a large entanglement. We prove that when $U_B=2^{n/2}B$ is drawn from a unitary 2-design, the ensemble $\{|B\rangle\}$ forms an $\mathcal{O}(2^{-2n})$-approximate $2n$-qubit state 2-design (large entanglement with high probability) in terms of the trace distance in SM \ref{ape9}.

\noindent\textit{\textbf{Additional results.—}}

\textit{Other $|B\rangle$.—}In a general case, $|B\rangle$ can be expressed as the linear combinations of maximally entangled states $|B\rangle=\sum_{i=1}^{2^k} c_i |v_i\rangle$, where $|v_i\rangle$ are maximally entangled. If we have knowledge on this decomposition and the preparation circuit $T_i$ for each $|v_i\rangle=(T_i\otimes I_n)|\Omega\rangle$ with $|\Omega\rangle\propto\sum_i |ii\rangle$. We can use linear combination of unitaries (LCU) \cite{childs2012hamiltonian} (see SM \ref{ape8}) to construct a $n+k$-qubit unitary $U_B$ with $\lambda=2^{n/2}\||B\rangle\|_1^{-1}$ where $\||B\rangle\|_1=\sum_{i=1}^{2^k} |c_i|\leq 2^{k/2}$. When $k=o(n)$, we can expect a large entanglement of $|B\rangle$ and thus an exponential large $\lambda$. Also, the LCU construction of $U_B$ has the same gate complexity as the circuit for preparing $|B\rangle$ \cite{zhang2022quantum}.

\textit{Hamiltonian simulation.—}
$V$ can be also considered as a Hamiltonian simulation unitary. Similar to the circuit case, the result based on the first-order Trotter formula \cite{lloyd1996universal} is summarized in Corollary \ref{cbeh} (also see SM \ref{ape7}).
\begin{corollary}[Hamiltonian simulation by the first-order Trotter formula]\label{cbeh}
Consider the state $|A\rangle=V|A_0\rangle$ with $|A_0\rangle$ a product state and $\rho_{A_0}$ having $\gamma_0=1/2$. Let $V\approx e^{-iHt}$ be the first-order product formula approximation of the dynamics governed by a Hamiltonian $H=H_{US}+ H_{LS}+H_{\text{Inter}}$ where $H_{Inter}$ contains all the interaction terms between US and LS. Then $\rho_A$ can be prepared by applying a CBE channel to $\rho_{A0}$ such that $\gamma$ in $\rho_A$ satisfies $\gamma\approx 2^{-1}e^{-\|H_{\text{Inter}}\|t}\nonumber$, where $\|H_{\text{Inter}}\|$ is the summation of the absolute values of the Pauli coefficients in $H_{\text{Inter}}$.
\end{corollary}
\noindent Therefore, as long as $n/2-\|H_{\text{Inter}}\|t=\Theta(n)$, the improvements is still exponential.

\textit{Gibbs state related.—}
State matrixization framework allows us to probe certain properties of low-temperature Gibbs states using high-temperature ones. Since an $n$-qubit Gibbs state $\rho_\beta=\frac{e^{-\beta H}}{\Tr(e^{-\beta H})}$ is exactly a DMSE of the purified Gibbs state of $\rho_{2\beta}$. To probe properties such as $\langle B|\rho_\beta\rangle$ with $|\rho_\beta\rangle$ denoting the normalized vectorized state from $\rho_\beta$, previously, we need to actually prepare $|\rho_\beta\rangle$ (temperature $1/(2\beta)$) using quantum algorithms \cite{poulin2009sampling,chen2023quantum,zhang2023dissipative}, which can be quantumly hard \cite{lucas2014ising,
kempe2006complexity}, in contrast, using our method, we only need to prepare a Gibbs state at the higher temperature $1/\beta$, easier for quantum computers \cite{bakshi2024high}. In SM \ref{ape10}, we give an example where the query complexity of estimating $\langle B|\rho_\beta\rangle$ is comparable for both methods but with an exponential separation on the preparation complexity between $\rho_\beta$ and $\rho_{2\beta}$.

\noindent\textit{\textbf{Summary and outlook.—}}
In this work, we fundamentally change the way of estimating quantum amplitudes. By encoding the information of the states $|A\rangle$ and $|B\rangle$ into the non-diagonal block of $\rho_A$ and the diagonal block of $U_B$, we can reduce the complexity of estimating $\langle A|B\rangle$. The plausible assumption is that $|A\rangle$ is prepared by a quantum circuit with known construction details and $|B\rangle$ is a known Bell state. $\rho_A$ is prepared by a technique called channel block encoding. The level of improvement is related to the entanglement of the two states: large entanglement of $|B\rangle$ and relatively no large entanglement of $|A\rangle$ leads to even exponential improvements. Our work jumps out of the stereotypes of amplitude estimation that estimating a value $|\mu|$ amplitude requires $\mathcal{O}(|\mu|^{-1})$ time. Our algorithm sheds light on a new mechanism for further quantum speedup and opens up new possibilities for quantum algorithms.

There are several interesting open questions and further directions. First, if we have $\rho_A$ with a $\gamma$ far below the upper bound, can we efficiently amplify it to the upper bound to enjoy our speed-ups? Second, in many situations, we may want to estimate the amplitude of an unknown quantum state with a large entanglement projected on a known product state. Thus, can we switch the roles between $|A\rangle$ and $|B\rangle$? If so, can we efficiently construct $U_B$ with $\lambda$ close to the upper bound? Third, can we rigorously encode classically hard instances with practical interest into amplitudes that fit our algorithm and show more than quadratic quantum speedup? Fourth, we can explore the potential of using channel block encoding to implement non-physical operations such as imaginary time evolution \cite{motta2020determining} and deterministic POVMs \cite{preskill1998lecture} for more applications. 

\noindent\textbf{Email: }\href{shangzx@hku.hk}{Z. S.: shangzx@hku.hk}, \href{zhaoqcs@hku.hk}{Q. Z.: zhaoqcs@hku.hk}. 

\noindent\textbf{Acknowledgments: }
The authors would like to thank Tianfeng Feng, Jue Xu, Zihan Chen, and Siyuan Chen for fruitful discussions; Dominik Hangleiter, Dong An, You Zhou, Naixu Guo, Zhengfeng Ji, Xiaoming Zhang, and Yusen Wu for kindly answering relevant questions; Chaoyang Lu, Xiao Yuan, and Zhenhuan Liu for insightful comments. Z.C.S.\ and Q.Z.\ acknowledge the funding from the HKU Seed Fund for Basic Research for New Staff via Project 2201100596, the Guangdong Natural Science Fund via Project 2023A1515012185, the National Natural Science Foundation of China (NSFC) via Project Nos.\ 12305030 and 12347104, Hong Kong Research Grant Council (RGC) via Project No.\ 27300823, N\_HKU718/23, and R6010-23, Guangdong Provincial Quantum Science Strategic Initiative GDZX2200001.

\bibliography{ref}
\clearpage
\begin{appendix}
\renewcommand\thefigure{\thesection.\arabic{figure}}
\onecolumngrid
\textbf{Supplementary Material of "Entanglement-induced exponential advantage in amplitude estimation via state matrixization"}

Zhong-Xia Shang and Qi Zhao
\renewcommand{\addcontentsline}{\oldacl}
\renewcommand{\tocname}{Contents}
\tableofcontents

\section{Preliminary: Hadamard test and amplitude estimation}\label{ape1}
Given a unitary $U$ and a density matrix $\rho$, Hadamard test \cite{aharonov2006polynomial} can give a standard quantum limit estimate of $\text{Re}[\Tr(U\rho)]$ by measuring the Pauli $Z$ expectation value of an ancilla qubit. Assuming $|\text{Re}[\Tr(U\rho)]|\ll 1 $, we have the following lemma:
\begin{lemma}[Hadamard test \cite{aharonov2006polynomial}]\label{p1}$\\$
To estimate $\mu=\text{Re}[\Tr(U\rho)]$ up to a \textbf{relative} error $\epsilon$ (i.e. $|\bar{\mu}-\mu|\leq \epsilon |\mu|$) with a success probability at least $1-\delta$, the query complexity is:
$$T=\mathcal{O}(|\mu|^{-2}\epsilon^{-2}\log(\delta^{-1})).$$
\end{lemma}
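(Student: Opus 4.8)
The plan is to analyze the standard Hadamard-test circuit explicitly, read off the measurement statistics, and then convert the target relative error into a sample count via a concentration bound. First I would write the circuit down: prepare an ancilla qubit in $|0\rangle$ together with the system register in state $\rho$, apply a Hadamard to the ancilla, then a controlled-$U$ with the ancilla as control, then a second Hadamard, and finally measure the ancilla in the computational basis. A direct computation of the ancilla's reduced state shows the two outcome probabilities are $p_0=\tfrac{1}{2}(1+\text{Re}[\Tr(U\rho)])$ and $p_1=\tfrac{1}{2}(1-\text{Re}[\Tr(U\rho)])$, so the expectation of the Pauli-$Z$ observable on the ancilla equals exactly $\mu=\text{Re}[\Tr(U\rho)]$. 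This is the only place the structure of the test is used, and it is a one-line linearity calculation once the controlled-$U$ step is expanded.

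Next I would set up the estimator and bound its fluctuations. Each run produces an independent $\pm 1$ outcome $X_j$ (the $Z$-eigenvalue) with mean $\mu$ and variance $1-\mu^2\le 1$; under the stated assumption $|\mu|\ll 1$ the variance is essentially $1$. The empirical mean $\bar\mu=\frac{1}{T}\sum_{j=1}^{T}X_j$ over $T$ runs is then an unbiased estimator of $\mu$ with variance at most $1/T$. Since the $X_j$ are bounded in $[-1,1]$, Hoeffding's inequality gives $\Pr[\,|\bar\mu-\mu|\ge t\,]\le 2\exp(-Tt^2/2)$. Setting the tolerated absolute deviation to $t=\epsilon|\mu|$ and requiring the right-hand side to be at most $\delta$ yields $T\ge 2\epsilon^{-2}|\mu|^{-2}\log(2\delta^{-1})$, i.e.\ $T=\mathcal{O}(|\mu|^{-2}\epsilon^{-2}\log(\delta^{-1}))$, as claimed. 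A Chebyshev bound combined with a median-of-means amplification would reach the same scaling while making the $\log(\delta^{-1})$ dependence transparent, and either route is acceptable.

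There is no serious obstacle here; the proof is routine. The only point deserving care — and the reason the $|\mu|^{-2}$ factor appears — is the relative-error convention: because the admissible absolute error shrinks to $\epsilon|\mu|$, the sample complexity inherits an extra $|\mu|^{-2}$ relative to a fixed-additive-error estimate. This quadratic-in-$|\mu|^{-1}$ cost is exactly the standard quantum limit, and it is precisely the scaling that the matrixization framework is designed to beat through the factors $\gamma$ and $\lambda$.
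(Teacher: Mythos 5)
Your proposal is correct and follows exactly the route the paper intends: the paper's justification of this lemma is a one-line remark that it "is derived from Hoeffding's inequality," and your explicit computation of the ancilla statistics $p_{0,1}=\tfrac{1}{2}(1\pm\mu)$ followed by Hoeffding with absolute deviation $t=\epsilon|\mu|$ is precisely that argument spelled out. Nothing is missing; the derivation of $T=\mathcal{O}(|\mu|^{-2}\epsilon^{-2}\log(\delta^{-1}))$ is sound.
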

This lemma is derived from Hoeffding's inequality \cite{boucheron2003concentration} and is tight whenever $|\text{Re}[\Tr(U\rho)]|\ll 1 $ corresponds to the case where the variance is comparable with the bound of the random variable ($\langle Z\rangle$).

Given two state $|S_1\rangle$ and $|S_2\rangle$, amplitude estimation \cite{brassard2002quantum,aaronson2020quantum,grinko2021iterative} can give a Heisenberg limit estimate of $|\langle S_2|S_1\rangle|$ given access on the Grover operator \cite{grover1996fast}:
\begin{eqnarray}
U_G=-(I-2|S_1\rangle\langle S_1|)(I-2|S_2\rangle\langle S_2|)
.\end{eqnarray}
We have the following lemma:
\begin{lemma}[Amplitude estimation \cite{aaronson2020quantum,grinko2021iterative}]\label{p2}$\\$
To estimate $\mu=|\langle S_2|S_1\rangle|$ up to an \textbf{additive} error $\epsilon$ (i.e. $|\bar{\mu}-\mu|\leq \epsilon$) with a success probability at least $1-\delta$, the query complexity is:
$$T=\mathcal{O}(\epsilon^{-1}\log(\delta^{-1})).$$
\end{lemma}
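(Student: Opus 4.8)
The plan is to reduce the estimation of $\mu = |\langle S_2|S_1\rangle|$ to a phase-estimation problem for the Grover operator $U_G$ and then invoke the quantitative guarantees of the cited quantum-approximate-counting schemes. First I would analyze the spectral structure of $U_G = -(I-2|S_1\rangle\langle S_1|)(I-2|S_2\rangle\langle S_2|)$. Each factor is a reflection, and in the two-dimensional real subspace spanned by $|S_1\rangle$ and $|S_2\rangle$ the product of two reflections is a rotation; standard Grover geometry shows that $U_G$ leaves this plane invariant and acts on it as a rotation by an angle $2\theta$, where $\theta$ is fixed by $\mu$ through an elementary trigonometric relation (with $\mu$ a smooth, monotone function of $\theta$ on the relevant interval). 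The eigenvalues of $U_G$ restricted to this plane are therefore $e^{\pm 2i\theta}$, and the accessible initial state (say $|S_1\rangle$, which already lies in the plane) decomposes as a superposition of the two eigenvectors. Consequently, recovering $\theta$ to additive precision $\mathcal{O}(\epsilon)$ recovers $\mu$ to additive precision $\mathcal{O}(\epsilon)$.

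Second I would bound the query cost of estimating $\theta$. The direct route is quantum phase estimation on $U_G$: $T$ controlled applications estimate the eigenphase $2\theta$ to additive error $\mathcal{O}(1/T)$ with constant success probability, so $T = \mathcal{O}(\epsilon^{-1})$ queries suffice for an $\mathcal{O}(\epsilon)$-accurate estimate of $\theta$. Because the paper's constructions are meant to avoid the quantum Fourier transform, I would instead follow the QFT-free schemes of Refs. \cite{aaronson2020quantum,grinko2021iterative}: apply powers $U_G^{k}$ with geometrically increasing exponents and, from the Bernoulli statistics of a single-qubit measurement at each round, iteratively shrink a classical confidence interval for $\theta$. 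The exponents form a geometric sequence whose total is $\mathcal{O}(\epsilon^{-1})$, matching the Heisenberg-limited query budget.

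Third, to upgrade the constant per-run success probability to the stated $1-\delta$, I would run $\mathcal{O}(\log(\delta^{-1}))$ independent estimates and return their median; a Hoeffding/Chernoff argument shows the median lands within $\epsilon$ of $\mu$ except with probability at most $\delta$. This multiplies the query count by the $\log(\delta^{-1})$ factor appearing in the statement, giving $T=\mathcal{O}(\epsilon^{-1}\log(\delta^{-1}))$.

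The main obstacle I anticipate is the robustness of the classical post-processing in the QFT-free route rather than the spectral reduction, which is routine. One must control phase-wrapping ambiguities, since a measurement at level $k$ only reveals $\theta$ modulo a $1/k$-scale period, and one must ensure that the per-round failure probabilities sum to at most $\delta$ so that the interval refinement never discards the true value of $\theta$. Handling this carefully is where the real work lies, and it is precisely what the cited analyses supply; the present lemma therefore follows by specializing their guarantees to the $U_G$ and overlap $\mu=|\langle S_2|S_1\rangle|$ defined above.
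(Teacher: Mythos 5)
Your proposal is correct and is essentially the same approach as the paper's: the paper does not prove this lemma at all, but imports it as a preliminary from Refs.~\cite{aaronson2020quantum,grinko2021iterative}, and your argument likewise bottoms out in "specializing their guarantees," preceded by an accurate sketch of what those works actually do (the two-reflection rotation structure of $U_G$ with eigenphases $e^{\pm 2i\theta}$, QFT-free iterative refinement with geometrically growing powers of $U_G$, and median amplification to obtain the $\log(\delta^{-1})$ factor). No gap.
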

Note that Lemma \ref{p2} adopts the results from Ref. \cite{aaronson2020quantum,grinko2021iterative} where there is no need to build quantum Fourier transform circuit as shown in the original amplitude estimation protocol \cite{brassard2002quantum}.

\section{Overview \label{overview}}
The algorithm proposed in this work contains a paradigm shift. While $\langle B|A\rangle$ is determined by $|A\rangle$ and $|B\rangle$, these states are merely assemblies of complex numbers. We propose using alternative quantum objects to encode the same information as $|A\rangle$ and $|B\rangle$ to jump out of the limitations of amplitude estimation. 

The basic framework has been summarized in Fig.~\ref{f1}. Suppose that we have $2n$-qubit states $|A\rangle$ and $|B\rangle$ with the expressions $|A\rangle=\sum_{ij}\alpha_{ij}|i\rangle|j\rangle$ and $|B\rangle=\sum_{ij}\beta_{ij}|i\rangle|j\rangle$ (When the number of qubits is odd, we can always add an ancilla qubit to meet the case.), we first turn them into $n$-qubit matrices $A=\sum_{ij}\alpha_{ij}|i\rangle\langle j|$ and $B=\sum_{ij}\beta_{ij}|i\rangle\langle j|$ which we call the matrixization procedure. For example, under matrixization, a 2-qubit Bell state $(|0\rangle|0\rangle+|1\rangle|1\rangle)/\sqrt{2}$ will be turned into an un-normalized single-qubit identity operator $I/\sqrt{2}$. We will also call the inverse procedure the vectorization procedure. Their formal definitions are given below:
\begin{definition}[Vectorization and Matrixization]
Given a matrix $O=\sum_{ij}o_{ij}|i\rangle\langle j|$, the vectorization mapping $\mathcal{V}$ is defined as follows: $$\mathcal{V}[O]:=\ket{\ket{O}}=\sum_{ij}o_{ij}|i\rangle|j\rangle.$$
Matrixization is the reverse mapping of vectorization $\mathcal{M}[\ket{\ket{O}}]=O$.
\end{definition}
\noindent We can also understand these two mappings from Choi–Jamiołkowski isomorphism \cite{choi1975completely}. 
In the following, we will call the system of $|i\rangle$ in $\ket{\ket{O}}$ with $i$ the row index of $O$ as the upper subsystem (US) and call the system of $|j\rangle$ with $j$ the column index as the lower subsystem (LS). 

When $|A\rangle$ and $|B\rangle$ are turned into matrices, we have $\langle B|A\rangle=\Tr(B^\dag A)$. Thus, the current task is to find quantum objects that encode (A) and (B) in such a way that the value of $\Tr(B^\dagger A)$ can be estimated. In quantum computing, we are familiar with the expression of the form $\Tr(U\rho)$, which the Hadamard test or the amplitude estimation can evaluate. The similarity between $\Tr(U\rho)$ and $\Tr(B^\dag A)$ suggests encoding $B$ into a unitary operator $U_B$ and $A$ into a density matrix $\rho_A$. In this way, we can estimate the amplitude under the matrixization picture with the help of the Hadamard test or the amplitude estimation. The benefit of matrixization will be clear in the later sections. 

We want to encode the $2n$-qubit pure state $|B\rangle$ and $|A\rangle$ into a $2^n\times 2^n$ diagonal block of a 
larger unitary operator $U_B$ and density matrix $\rho_A$,
referred to as unitary block state encoding (UBSE) and density matrix state encoding (DMSE),  respectively. The formal definitions of these encoding methods are as follows. 
\begin{definition}[Unitary block state encoding (UBSE)]
Given an $n+k$-qubit unitary operator $U_B$, if $U_B$ satisfies:
$$\mathcal{V}[(I_n\otimes \langle 0|_k ) U_B (I_n\otimes |0\rangle_k)]=\lambda |B\rangle,$$
where $|0\rangle_k$ is the all-zero computational basis of the $k$-qubit ancilla system, $\lambda\geq 0$, and $|B\rangle$ is a $2n$-qubit pure state, then $U_B$ is called an $(n+k,\lambda)$-UBSE of $|B\rangle$.
\end{definition}
\noindent In other words, $U_B$ is a block encoding \cite{low2019hamiltonian} of $B=\mathcal{M}[|B\rangle]$. Meanwhile, DMSE can defined as:
\begin{definition}[Density matrix state encoding (DMSE)]
Given an $l+n$-qubit density matrix $\rho_A$, if $\rho_A$ satisfies:
$$\mathcal{V}[(\langle s_1|_l\otimes I_n) \rho_A (|s_2\rangle_l\otimes I_n)]=\gamma |A\rangle,$$
where $|s_1\rangle_l$ and $|s_2\rangle_l$  are two computational basis states of the $l$-qubit ancilla system satisfying $\langle s_1|s_2\rangle_l=0$, $\gamma\geq 0$, and $|A\rangle$ is a $2n$-qubit pure state, then $\rho_A$ is called an $(l+n,s_1,s_2,\gamma)$-NDSE of $|A\rangle$.
\end{definition}
\noindent The reason we use the non-diagonal block of density matrices is to get rid of the Hermitian and positive semi-definite restrictions of density matrices. In this work, we will mainly focus on $(1+n,0,1,\gamma)$-DMSE where $\rho_A$ has the form:
\begin{eqnarray}
\rho_A=\begin{pmatrix}
\cdot & \gamma A \\
\gamma A^\dag & \cdot
\end{pmatrix}.
\end{eqnarray}
Note that $\rho_A$ is also a $(1+n,1,0,\gamma)$-DMSE of $|A^\dag\rangle$.

\section{Estimating amplitudes by DMSE and UBSE (Theorem \ref{t1a})}\label{ape2}
Given the two quantum states $|A\rangle$ and $|B\rangle$, in this work, we are interested in estimating $\langle B |A\rangle$, the amplitude of the state $|A\rangle$ on the basis state $|B\rangle$. Throughout this work, we set $|A\rangle$ and $|B\rangle$ as $2n$-qubit states. We define the first $n$-qubit subsystem as the upper subsystem (US) and the last $n$-qubit subsystem as the lower subsystem (LS). Then, under the US-LS bi-partition, $|A\rangle$ and $|B\rangle$ have the Schmidt decomposition:
\begin{eqnarray}
|A\rangle=\sum_i \alpha_i |a_{ui}\rangle|a_{li}\rangle\text{, }
|B\rangle=\sum_i \beta_i |b_{ui}\rangle|b_{li}\rangle
,\end{eqnarray}
where $\{\alpha_i\}$ and $\{\beta_i\}$ are real and positive and we have $\{|a_{ui}\rangle\}$, $\{|a_{li}\rangle\}$, $\{|b_{ui}\rangle\}$, and $\{|b_{li}\rangle\}$ satisfy $\langle a_{ui}|a_{uj}\rangle=\langle a_{li}|a_{lj}\rangle=\langle b_{ui}|b_{uj}\rangle=\langle b_{li}|b_{lj}\rangle=\delta_{ij}$. 

We now show how to use DMSE and UBSE to estimate $\langle B|A\rangle$. We talk about the query complexity of estimating $\langle B|A\rangle$ in terms of $U_B$ and the preparation of $\rho_A$ (and also their conjugate transposes). We leave the constructions of $U_B$ and $\rho_A$ in the following section. We will consider two models that will lead to a standard quantum limit estimation and a Heisenberg limit estimation respectively. The concrete circuit implementation can be found in Fig. \ref{f2}.

\begin{figure}[ht]
\centering
\subfigure[]{
\Qcircuit @C=2.4em @R=1.6em {
\lstick{}  & \ctrlo{3} & \ctrl{3} & \qw &\qw  & \ctrl{1}&\qw \\
\lstick{}  & \qw & \qw & \ctrlo{1} & \ctrl{1} & \ctrl{0}&\qw \\
\lstick{}  & \qw & \qw &  \gate{X} & \gate{iY}&\qw&\qw\\
\lstick{}  & \multigate{1}{U_B} & \multigate{1}{U_B^\dag}  &\qw & \qw &\qw & \qw  &\dstick{U_B}\\
\lstick{}  & \ghost{U_B} & \ghost{U_B^\dag} & \qw & \qw  &\qw&\qw\\
\protect\inputgroupv{3}{4}{.8em}{.8em}{\rho_A}\\
\protect\gategroup{4}{7}{5}{7}{.8em}{\}}
}}

\subfigure[]{
\Qcircuit @C=2.4em @R=1.6em {
\lstick{} & \gate{H} & \multigate{4}{W_r} & \gate{H} & \qw \\
\lstick{} & \gate{H} & \ghost{W_r} & \gate{H} & \qw \\
\lstick{} & \qw & \ghost{W_r} & \qw & \qw \\
\lstick{} & \qw & \ghost{W_r} & \qw & \qw &\dstick{U_B}\\
\lstick{} & \qw & \ghost{W_r} & \qw & \qw \\
\protect\inputgroupv{3}{4}{.8em}{.8em}{\rho_A}\\
\protect\gategroup{4}{5}{5}{5}{.8em}{\}}
}}
\hspace{20mm}
\subfigure[]{
\Qcircuit @C=2.4em @R=1.6em {
\lstick{} & \gate{H} & \ctrl{1} & \gate{H} & \qw \\
\lstick{} & \qw & \multigate{1}{W_r} & \qw & \qw \\
\lstick{} & \qw & \ghost{W_r}\qwx[1]& \qw & \qw \\
\lstick{} & \qw &  \qw\qwx[1] & \qw & \qw \\
\lstick{} & \qw & \multigate{2}{W_r}& \qw & \qw \\
\lstick{} & \qw & \ghost{W_r} & \qw & \qw &\dstick{U_B}\\
\lstick{} & \qw & \ghost{W_r} & \qw & \qw \\
\protect\inputgroupv{4}{6}{.8em}{.8em}{|S_A\rangle}\\
\protect\gategroup{6}{5}{7}{5}{.8em}{\}}
}}
\caption{(a): The circuit diagram of $W_r$. For $W_i$, we can simply turn the $X$, $iY$, and CZ gates in $W_r$ into $iX$, $Y$, and $
|00\rangle\langle00|-|01\rangle\langle01|-|10\rangle\langle10|-|11\rangle\langle11|$. The systems are arranged in the order $\{1,1,1,n,k\}$. (b): The circuit diagram of $U_{B,r1}$. For $U_{B,i1}$, we can simply turn $W_r$ to $W_i$. The systems are arranged in the order $\{1,1,1,n,k\}$. (c): The circuit diagram of $U_{B,r2}$. For $U_{B,i2}$, we can simply turn $W_r$ to $W_i$. The systems are arranged in the order $\{1,1,1,m,1,n,k\}$.\label{f2}}
\end{figure}

\subsection{Standard quantum limit}
The first model is given access on $\rho_A$ ($(1+n,0,1,\gamma)$-DMSE) and $U_B$. First, we need to combine $U_B$ with linear combinations of unitaries (LCU) \cite{childs2012hamiltonian} to build two $2+1+n+k$-qubit \footnote{In this work, expressions like $2+1+n+k$-qubit means the whole system is arranged with the order $(2,1,n,k)$ where each symbol denotes a subsystem which should be clear in the text with the corresponding number of qubits.} unitaries $U_{B,r1}$ and $U_{B,i1}$. Their definitions and functions follow the lemma below:
\begin{lemma}[$U_{B,r1}$, $U_{B,i1}$]\label{l1}$\\$

$U_{B,r1}=(H\otimes H\otimes I_{1+n+k}) W_r (H\otimes H\otimes I_{1+n+k})$ with unitary: $$W_r=|00\rangle\langle 00|\otimes X \otimes U_B+i|01\rangle\langle 01|\otimes Y \otimes U_B+|10\rangle\langle 10|\otimes X \otimes U_B^\dag-i|11\rangle\langle 11|\otimes Y \otimes U_B^\dag,$$ satisfies $$(\langle 00|\otimes I_{1+n}\otimes \langle 0|_k)U_{B,r1}(| 00\rangle\otimes I_{1+n}\otimes |0\rangle_k)=\frac{1}{2}\begin{pmatrix}
0 & \lambda B\\
\lambda B^\dag & 0
\end{pmatrix}.$$

$U_{B,i1}=(H\otimes H\otimes I_{1+n+k}) W_i (H\otimes H\otimes I_{1+n+k})$ with unitary: $$W_i=i|00\rangle\langle 00|\otimes X \otimes U_B-|01\rangle\langle 01|\otimes Y \otimes U_B-i|10\rangle\langle 10|\otimes X \otimes U_B^\dag-|11\rangle\langle 11|\otimes Y \otimes U_B^\dag,$$ satisfies $$(\langle 00|\otimes I_{1+n}\otimes \langle 0|_k)U_{B,i1}(| 00\rangle\otimes I_{1+n}\otimes |0\rangle_k)=\frac{1}{2}\begin{pmatrix}
0 & i\lambda B\\
-i\lambda B^\dag & 0
\end{pmatrix}.$$
\end{lemma}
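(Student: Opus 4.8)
The plan is to verify both claimed identities by direct computation, reading the $H\otimes H$ conjugation as an equal-weight linear combination of the four terms in $W_r$ (resp. $W_i$) and invoking the UBSE property only at the very last step. The only genuinely nontrivial input is the block-encoding identity supplied by the UBSE definition: since $U_B$ is a $(n+k,\lambda)$-UBSE of $|B\rangle$, we have $\mathcal{V}[(I_n\otimes\langle 0|_k)U_B(I_n\otimes|0\rangle_k)]=\lambda|B\rangle$, i.e. $(I_n\otimes\langle 0|_k)U_B(I_n\otimes|0\rangle_k)=\lambda B$ as an $n$-qubit operator, and taking the adjoint $(I_n\otimes\langle 0|_k)U_B^\dag(I_n\otimes|0\rangle_k)=\lambda B^\dag$. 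Everything else is bookkeeping on the ancilla registers, keeping $X,Y$ acting on the single qubit (qubit three, the $\rho_A$ ancilla index) while $U_B,U_B^\dag$ act on the overlapping $n+k$ register.

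First I would dispose of the two LCU ancillas. Using $\langle 00|(H\otimes H)=\tfrac12\sum_a\langle a|$ and $(H\otimes H)|00\rangle=\tfrac12\sum_b|b\rangle$ with $a,b$ ranging over the four two-bit strings, the sandwiched quantity becomes $(\langle 00|\otimes I)U_{B,r1}(|00\rangle\otimes I)=\tfrac14\sum_{a,b}(\langle a|\otimes I)W_r(|b\rangle\otimes I)$. Since every term of $W_r$ carries a projector $|c\rangle\langle c|$ on the two ancillas, only the diagonal selections $a=b$ survive, leaving $\tfrac14[(X+iY)\otimes U_B+(X-iY)\otimes U_B^\dag]$ on the remaining $1+n+k$ qubits. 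Then, using $X+iY=\bigl(\begin{smallmatrix}0&2\\0&0\end{smallmatrix}\bigr)$ and $X-iY=\bigl(\begin{smallmatrix}0&0\\2&0\end{smallmatrix}\bigr)$ (the raising/lowering operators on qubit three), this collapses to $\tfrac12\bigl(\begin{smallmatrix}0&U_B\\U_B^\dag&0\end{smallmatrix}\bigr)$ in the qubit-three basis. Projecting the $k$-qubit register onto $|0\rangle_k$ and applying the identity from the first paragraph gives $\tfrac12\bigl(\begin{smallmatrix}0&\lambda B\\\lambda B^\dag&0\end{smallmatrix}\bigr)$, as claimed.

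The $U_{B,i1}$ identity follows from the identical argument with the phase pattern of $W_i$: the four surviving diagonal terms now combine into $\tfrac14[(iX-Y)\otimes U_B+(-iX-Y)\otimes U_B^\dag]$, and since $iX-Y=\bigl(\begin{smallmatrix}0&2i\\0&0\end{smallmatrix}\bigr)$ and $-iX-Y=\bigl(\begin{smallmatrix}0&0\\-2i&0\end{smallmatrix}\bigr)$, the $|0\rangle_k$ projection yields $\tfrac12\bigl(\begin{smallmatrix}0&i\lambda B\\-i\lambda B^\dag&0\end{smallmatrix}\bigr)$. No step presents a real obstacle; the one point demanding care is the subsystem bookkeeping, together with the observation that the raising and lowering operators $X\pm iY$ are precisely what extract the off-diagonal block demanded by the DMSE form of $\rho_A$, so that the output of $U_{B,r1}$ matches the structure of $\rho_A$ that makes Eq.~\eqref{eq1} hold.
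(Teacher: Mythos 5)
Your proposal is correct. The paper actually states Lemma~\ref{l1} without writing out a proof (it is treated as the routine LCU sandwich computation), and your verification --- Hadamard conjugation reducing the sandwich to the equal-weight diagonal sum $\tfrac14[(X+iY)\otimes U_B+(X-iY)\otimes U_B^\dag]$, the raising/lowering identities for $X\pm iY$, and the final $|0\rangle_k$ projection invoking the UBSE property $(I_n\otimes\langle 0|_k)U_B(I_n\otimes|0\rangle_k)=\lambda B$ together with its adjoint --- is precisely the computation the paper leaves implicit, carried out correctly for both the real and imaginary variants.
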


Based on the Lemma. \eqref{l1}, we have the following relations:
\begin{eqnarray}
\frac{\Tr(U_{B,r1}|00\rangle\langle 00|\otimes \rho_A\otimes |0\rangle\langle 0|_k)}{\gamma\lambda}&&= \text{Re}[\langle B|A\rangle],\label{g1}\\
\frac{\Tr(U_{B,i1}|00\rangle\langle 00|\otimes \rho_A\otimes |0\rangle\langle 0|_k)}{\gamma\lambda}&&= \text{Im}[\langle B|A\rangle].\label{g2}
\end{eqnarray}
Thus, the LHS of Eq. \eqref{g1}-\eqref{g2} can be used to estimate $\langle B |A\rangle$ from $\rho_A$ and $U_B$ and their forms indicate we can use Hadamard tests \cite{aharonov2006polynomial} to evaluate their values. According to Lemma \ref{p1}, the query complexity of estimating $\mu_r=\text{Re}[\langle B|A\rangle]$ to a relative error $\epsilon$ with a success probability at least $1-\delta$ using Eq \eqref{g1} is:
\begin{eqnarray}\label{c1}
T_{s1}=\mathcal{O}(\gamma^{-2}\lambda^{-2} |\mu_r|^{-2}\epsilon^{-2} \log(\delta^{-1})).
\end{eqnarray}

In contrast, if we directly measure $\mu_r$ by Hadamard test in terms of $|A\rangle$ and $|B\rangle$ rather than $\rho_A$ and $U_B$, the complexity is:
\begin{eqnarray}\label{c2}
T_{s2}=\mathcal{O}( |\mu_r|^{-2}\epsilon^{-2} \log(\delta^{-1})).
\end{eqnarray}
Thus, there is a $\gamma^{-2}\lambda^{-2}$ complexity improvement. The imaginary part has a similar conclusion.

Note that, in this work, we mainly focus on exponentially small amplitudes, thus $T_{s2}$ is tight. However, we will see, even $\text{Re}[\langle B|A\rangle]$ is very small, $\Tr(U_{B,r1}|00\rangle\langle 00|\otimes \rho_A\otimes |0\rangle\langle 0|_k)$ can be of order $\mathcal{O}(1)$, which means the true complexity of our method can be significantly small than $T_{s1}$ and Bernstein inequality \cite{boucheron2003concentration} should be applied instead of Hoeffding's inequality used in Lemma \ref{p1}. Nonetheless, this loose complexity \eqref{c1} is already enough to show the advantages of our method and thus, we will use it throughout this work.

\subsection{Heisenberg limit}
The second model is given access to $U_B$ and a $m+1+n$-qubit pure state $|S_A\rangle$ which is the purification of $\rho_A$ i.e. $Tr_m(|S_A\rangle\langle S_A|)=\rho_A$ with $Tr_m(\cdot)$ the partial trace on the $m$-qubit ancilla system. We need to combine $U_B$ with LCU to build two $1+2+m+1+n+k$-qubit unitaries $U_{B,r2}$ and $U_{B,i2}$ inspired by the construction in Ref. \cite{rall2020quantum}. Their definitions and functions follow the lemma below:
\begin{lemma}[$U_{B,r2}$, $U_{B,i2}$]\label{l2}$\\$

$U_{B,r2}=( H\otimes I_{2+m+1+n+k}) V_r ( H\otimes I_{2+m+1+n+k})$ with unitary: $$V_r=|0\rangle\langle 0|\otimes I_{2+m+1+n+k}+|1\rangle\langle 1|\otimes I_m \otimes W_r,$$ satisfies $$(\langle 000|\otimes I_m\otimes I_{1+n}\otimes \langle 0|_k)U_{B,r2}(| 000\rangle\otimes  I_m\otimes I_{1+n}\otimes |0\rangle_k)=\frac{1}{4}I_m\otimes\begin{pmatrix}
2I & \lambda B\\
\lambda B^\dag &2I
\end{pmatrix}.$$

$U_{B,i2}=( H\otimes I_{2+m+1+n+k}) V_i ( H\otimes I_{2+m+1+n+k})$ with unitary: $$V_i=|0\rangle\langle 0|\otimes I_{2+m+1+n+k}+|1\rangle\langle 1|\otimes I_m \otimes W_i,$$ satisfies $$(\langle 000|\otimes I_m\otimes I_{1+n}\otimes \langle 0|_k)U_{B,i2}(| 000\rangle\otimes  I_m\otimes I_{1+n}\otimes |0\rangle_k)=\frac{1}{4}I_m\otimes\begin{pmatrix}
2I & i\lambda B\\
-i\lambda B^\dag &2I
\end{pmatrix}.$$
\end{lemma}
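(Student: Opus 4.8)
The plan is to prove the operator identity directly by expanding $U_{B,r2}=(H\otimes I)V_r(H\otimes I)$ and commuting the outer Hadamard through the bra and ket on the first (control) qubit. Writing $\langle 000|=\langle 0|_c\otimes\langle 00|_{\mathrm{LCU}}$ with $c$ the control qubit, the identity $\langle 0|_c H=\tfrac{1}{\sqrt2}(\langle0|+\langle1|)_c$ (and its transpose for the ket) produces an overall prefactor $\tfrac12$ together with a sum over the two control branches selected by $V_r=|0\rangle\langle0|_c\otimes I+|1\rangle\langle1|_c\otimes I_m\otimes W_r$. Since $V_r$ is block-diagonal in $c$, the cross terms $\langle0|_cV_r|1\rangle_c$ vanish and both diagonal blocks contribute with unit weight, so the sandwiched operator splits cleanly into an ``identity branch'' and a ``$W_r$ branch.''

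First I would evaluate the identity branch: the $|0\rangle\langle0|_c\otimes I_{2+m+1+n+k}$ term contributes $I$ on all remaining registers, and the leftover projections $\langle00|_{\mathrm{LCU}}(\cdot)|00\rangle_{\mathrm{LCU}}$ and $\langle0|_k(\cdot)|0\rangle_k$ collapse it to $I_m\otimes I_{1+n}$; this is the source of the diagonal $2I$ blocks. Next I would evaluate the $W_r$ branch: the factor $I_m$ passes through untouched (the $m$ purification qubits are never acted on), so the branch equals $I_m$ tensored with the object $(\langle00|_{\mathrm{LCU}}\otimes I_{1+n}\otimes\langle0|_k)\,\widetilde{W_r}\,(|00\rangle_{\mathrm{LCU}}\otimes I_{1+n}\otimes|0\rangle_k)$, where $\widetilde{W_r}$ is $W_r$ dressed by the LCU-ancilla Hadamards exactly as in $U_{B,r1}$. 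This is precisely the quantity computed in Lemma \ref{l1}, equal to $\tfrac12\!\left(\begin{smallmatrix}0&\lambda B\\\lambda B^\dag&0\end{smallmatrix}\right)$ on the $(1+n)$-qubit DMSE-plus-system register, using the UBSE relation $\langle0|_k U_B|0\rangle_k=\lambda B$.

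Then I would assemble the two branches. Adding $I_m\otimes I_{1+n}$ and $I_m\otimes\tfrac12\!\left(\begin{smallmatrix}0&\lambda B\\\lambda B^\dag&0\end{smallmatrix}\right)$ and multiplying by the outer $\tfrac12$ gives $\tfrac12 I_m\otimes\left(\begin{smallmatrix}I&\lambda B/2\\\lambda B^\dag/2&I\end{smallmatrix}\right)=\tfrac14 I_m\otimes\left(\begin{smallmatrix}2I&\lambda B\\\lambda B^\dag&2I\end{smallmatrix}\right)$, which is the stated identity. The $W_i$ case is identical except that Lemma \ref{l1} supplies $\tfrac12\!\left(\begin{smallmatrix}0&i\lambda B\\-i\lambda B^\dag&0\end{smallmatrix}\right)$ for the off-diagonal block, yielding the matrix with $i\lambda B$ and $-i\lambda B^\dag$.

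The step I expect to be the main obstacle is the $W_r$ branch: producing $\lambda B$ in the upper-right corner but $\lambda B^\dag$ in the lower-left requires both the $U_B$ and $U_B^\dag$ terms of $W_r$ to survive the projection with the correct placement. Under the bare projection $\langle00|_{\mathrm{LCU}}W_r|00\rangle_{\mathrm{LCU}}$ only the $|00\rangle\langle00|\otimes X\otimes U_B$ term appears, which would give $B$ in \emph{both} corners; the Hermitian $(B,B^\dag)$ pattern emerges only after the LCU-ancilla Hadamards coherently recombine the $|00\rangle,|10\rangle$ (and $|01\rangle,|11\rangle$) blocks and the CZ phase fixes the relative sign. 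Carefully tracking this recombination---i.e. verifying that the controlled-$W_r$ dressed by the LCU Hadamards reduces to the $U_{B,r1}$ computation of Lemma \ref{l1}---together with checking that the diagonal comes out as $+2I$ (real and positive, as needed later for the $2|\cdot|-1$ extraction) is the technical heart of the argument; everything else is bookkeeping of the prefactors and the pass-through of $I_m$.
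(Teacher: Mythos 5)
Your proof is correct, and it is essentially the argument the paper intends but never writes down: the paper states Lemma \ref{l2} without proof (only the downstream relations \eqref{g3}--\eqref{g4} are proven from it), and the intended verification is exactly your reduction --- split on the control qubit via the outer Hadamards, evaluate the identity branch to get the diagonal $2I$ blocks, evaluate the controlled branch via Lemma \ref{l1}, and assemble the prefactors.

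The obstacle you single out, however, is not merely the delicate step of the argument; it is a genuine inconsistency in the lemma as stated (and in Fig.~\ref{f2}(c)), and your instinct about its resolution is exactly right. Taken literally, $U_{B,r2}=(H\otimes I_{2+m+1+n+k})V_r(H\otimes I_{2+m+1+n+k})$ places Hadamards on the control qubit only, and $V_r$ controls the \emph{bare} $W_r$; then the controlled branch is the bare projection $\langle 00|\langle 0|_k W_r|00\rangle|0\rangle_k=X\otimes\lambda B$, and the full sandwich evaluates to
\[
\frac{1}{2}\,I_m\otimes\begin{pmatrix} I & \lambda B\\ \lambda B & I\end{pmatrix},
\]
i.e. $\lambda B$ in \emph{both} corners with weight $\tfrac{1}{2}$ rather than $\tfrac{1}{4}$, exactly as you predicted. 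This would also break the subsequent proof of Eq.~\eqref{g3}, since $\Tr(\rho_A\,\cdot\,)$ would then produce $\gamma\lambda(\Tr(AB)+\Tr(A^\dag B))$ instead of $\gamma\lambda(\Tr(AB^\dag)+\Tr(A^\dag B))=2\gamma\lambda\,\mathrm{Re}[\langle B|A\rangle]$. Your dressed reading is the correct fix, and the two natural ways of stating it coincide: defining $V_r$ to control $U_{B,r1}=(H\otimes H\otimes I_{1+n+k})W_r(H\otimes H\otimes I_{1+n+k})$ is equivalent to keeping $V_r$ as written but extending the outer Hadamards to $H\otimes H\otimes H\otimes I_{m+1+n+k}$, because the Hadamards on the two LCU ancillas commute with the projectors $|0\rangle\langle 0|$ and $|1\rangle\langle 1|$ on the control qubit and can be pushed inside the controlled gate. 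With either reading, your branch computation and the assembly to $\tfrac{1}{4}I_m\otimes\begin{pmatrix}2I&\lambda B\\ \lambda B^\dag&2I\end{pmatrix}$, together with the $i\lambda B$, $-i\lambda B^\dag$ variant for $W_i$, go through verbatim.
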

where in the lemma descriptions, the term $I_m \otimes W_{r(i)}$ has an interchange between $m$ and $2$ to simplify the expression. 

Based on the Lemma. \eqref{l2}, we have the following relations:
\begin{eqnarray}
\frac{2|\langle 000| \langle S_A| \langle 0|_k U_{B,r2} |0\rangle_k |S_A\rangle |000\rangle|-1}{\gamma\lambda}&&=\text{Re}[\langle B|A\rangle],\label{g3}\\
\frac{2|\langle 000| \langle S_A| \langle 0|_k U_{B,i2} |0\rangle_k |S_A\rangle |000\rangle|-1}{\gamma\lambda}&&=\text{Im}[\langle B|A\rangle].\label{g4}
\end{eqnarray}
\begin{proof}
First, we have:
\begin{eqnarray}
\langle 000|\langle  S_A| \langle 0|_k U_{B,r2} |0\rangle_k |S_A\rangle |000\rangle&&=\frac{1}{4}\Tr\left(|S_A\rangle\langle S_A| \left(I_m\otimes \begin{pmatrix}
2I & \lambda B\nonumber\\
\lambda B^\dag &2I
\end{pmatrix} \right)\right)\nonumber\\&&=\frac{1}{4}\Tr\left(\rho_A \begin{pmatrix}
2I & \lambda B\\
\lambda B^\dag &2I
\end{pmatrix}\right)\nonumber\\&&=\frac{1}{2}+\frac{\gamma\lambda}{4}(\Tr(AB^\dag)+\Tr(A^\dag B))=\frac{1}{2}+\frac{\gamma\lambda}{4}(\langle B|A\rangle+\langle A|B\rangle)\nonumber\\&&=\frac{1}{2}+\frac{\gamma\lambda Re[\langle B|A\rangle]}{2}\nonumber
.\end{eqnarray}
Since $\|\lambda B\|_{\infty}\leq 1$ with $\|\cdot\|_{\infty}$ denoting the spectral norm (largest singular value), $\begin{pmatrix}
2I & \lambda B\\
\lambda B^\dag &2I
\end{pmatrix}$ is positive semi-definite. Thus, we get:
$$|\langle 000|\langle  S_A| \langle 0|_k U_{B,r2} |0\rangle_k |S_A\rangle |000\rangle|=\frac{1}{2}+\frac{\gamma\lambda Re[\langle B|A\rangle]}{2}.$$
The proof of the imaginary part is similar.
\end{proof}

Thus, the LHS of Eq. \eqref{g3}-\eqref{g4} can be used to estimate $\langle B |A\rangle$ from $|S_A\rangle$ and $U_B$ and their forms indicate we can use amplitude estimation \cite{aaronson2020quantum,grinko2021iterative} to evaluate their values. To do the amplitude estimation to estimate $\mu_r=\text{Re}[\langle B|A\rangle]$, we need to build the Grover operator \cite{grover1996fast}: 
\begin{eqnarray}
U_{G,r}&&=-(I-2U_{B,r2} (|0\rangle\langle 0|_k\otimes |S_A\rangle\langle S_A|\otimes |000\rangle\langle 000|)U_{B,r2}^\dag)\nonumber\\&&(I-2|0\rangle\langle 0|_k\otimes |S_A\rangle\langle S_A|\otimes |000\rangle\langle 000|)\nonumber\\&&=-(I-2U_{B,r2}( |0\rangle\langle 0|_k \otimes( U_{SA}|0\rangle\langle 0|_{m+1+n}U_{SA}^\dag)\otimes |000\rangle\langle 000|)U_{B,r2}^\dag))\nonumber\\&&(I-2|0\rangle\langle 0|_k \otimes (U_{SA}|0\rangle\langle 0|_{m+1+n}U_{SA}^\dag)\otimes |000\rangle\langle 000|)
,\end{eqnarray}
where $U_{SA}$ is the unitary that prepares $|S_A\rangle$ from $|0\rangle_{m+1+n}$. Thus, the Grover operator can be built from $U_B$, $U_B^\dag$, $U_{SA}$, and $U_{SA}^\dag$. Then, according to Lemma \ref{p2}, the query complexity of estimating $\mu_r=\text{Re}[\langle B|A\rangle]$ to a relative error $\epsilon$ which corresponds to an additive error $\gamma\lambda|\mu_r|\epsilon/2$ in the amplitude estimation with a success probability at least $1-\delta$ using Eq \eqref{g1} is: 
\begin{eqnarray}\label{c3}
T_{h1}=\mathcal{O}(\gamma^{-1}\lambda^{-1} |\mu_r|^{-1}\epsilon^{-1} \log(\delta^{-1}))
.\end{eqnarray}
In contrast, if we directly measure $\mu_r$ by amplitude estimation in terms of $|A\rangle$ and $|B\rangle$ with the aid of the construction in Ref. \cite{rall2020quantum}, the complexity is:
\begin{eqnarray}\label{c4}
T_{h2}=\mathcal{O}( |\mu_r|^{-1}\epsilon^{-1} \log(\delta^{-1})).
\end{eqnarray}
Thus, there is a $\gamma^{-1}\lambda^{-1}$ complexity improvement. The imaginary part has a similar conclusion.

\subsection{Estimating $\gamma$ and $\lambda$}\label{ape3}
While the values of $\lambda$ and $\gamma$ can be directly concluded from the way of building $U_B$ and preparing $\rho_A$, there may be cases where we have no prior knowledge of the exact values of $\gamma$ in $\rho_A$ (also $|S_A\rangle$) and maybe also $\lambda$ in $U_B$. Thus, we need measurement strategies to evaluate these values in advance to estimate amplitudes. We give discussions here.

To give the formal analysis, we need the following error propagation lemma:
\begin{lemma}[Error propagation]\label{lll2}
Given the function $\frac{X}{YZ}$ of three independent random variables $X$, $Y$, and $Z$, to estimate $\frac{X}{YZ}$ to a relative error $\epsilon$, it is sufficient to estimate $X$, $Y$, and $Z$ to a relative error $\frac{\epsilon}{\sqrt{3}}$.
\end{lemma}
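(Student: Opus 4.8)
The plan is to invoke the standard first-order (linearized) error-propagation rule for a ratio of \emph{independent} random variables, where independence forces the individual relative errors to combine in quadrature rather than linearly; this quadrature sum is exactly what produces the stated $1/\sqrt{3}$ factor.

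First I would write the three estimators as multiplicative perturbations of their true values, $\bar{X}=X(1+\delta_X)$, $\bar{Y}=Y(1+\delta_Y)$, $\bar{Z}=Z(1+\delta_Z)$, so that $\delta_X,\delta_Y,\delta_Z$ are precisely the signed relative errors, independent by hypothesis and each with standard deviation $\epsilon/\sqrt{3}$. Forming the natural estimator $\overline{X/(YZ)}=\bar{X}/(\bar{Y}\bar{Z})$, I would compute
\begin{equation}
\frac{\overline{X/(YZ)}}{X/(YZ)}=\frac{1+\delta_X}{(1+\delta_Y)(1+\delta_Z)},
\end{equation}
and Taylor-expand the right-hand side to first order in the small quantities, obtaining a net relative error of $\delta_X-\delta_Y-\delta_Z+\mathcal{O}(\epsilon^2)$.

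Next I would bound the spread of this combined error. Since $\delta_X,\delta_Y,\delta_Z$ are independent and zero-mean, all cross terms vanish and the variances add:
\begin{equation}
\mathrm{Var}(\delta_X-\delta_Y-\delta_Z)=\mathrm{Var}(\delta_X)+\mathrm{Var}(\delta_Y)+\mathrm{Var}(\delta_Z)=3\cdot\frac{\epsilon^2}{3}=\epsilon^2,
\end{equation}
so the relative error of $X/(YZ)$ has standard deviation $\epsilon$ to leading order, which is exactly the claim.

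The main obstacle is less the algebra than fixing the interpretation of ``relative error $\epsilon$'' and controlling the neglected higher-order terms. I would need to adopt the convention that the relative error refers to the statistical spread (standard deviation) of the estimator, so that the quadrature rule genuinely applies; a crude worst-case triangle-inequality bound on $|\delta_X-\delta_Y-\delta_Z|$ would instead give the looser factor $\sqrt{3}\,\epsilon$ and would \emph{not} reproduce the stated $1/\sqrt{3}$ improvement. I would also note that the linearization is valid in the regime $\epsilon\ll 1$ of interest here, where the $\mathcal{O}(\epsilon^2)$ remainder is subleading, and that the signs of $\delta_Y,\delta_Z$ entering the denominator are immaterial to the variance computation since only their squares survive.
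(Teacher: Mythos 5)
Your proposal is correct and follows essentially the same route as the paper: both are first-order (delta-method) error propagation, where independence makes the three relative variances add in quadrature, giving $3\cdot(\epsilon/\sqrt{3})^2=\epsilon^2$, with bias and higher-order terms dismissed as subleading for $\epsilon\ll 1$. Your multiplicative-perturbation bookkeeping $\bar{X}/(\bar{Y}\bar{Z}) = (X/(YZ))(1+\delta_X-\delta_Y-\delta_Z+\mathcal{O}(\epsilon^2))$ is just a cleaner presentation of the paper's variance-propagation formulas, including the same caveat that the claim concerns statistical spread rather than a worst-case triangle-inequality bound.
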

\begin{proof}
Define $E(X)=\mu_x$, $E(Y)=\mu_y$, and $E(Z)=\mu_z$. First, we have:
$$Var\left(\frac{X}{YZ}\right)\approx\frac{\mu_y^2\mu_z^2 Var(X)+\mu_x^2Var(YZ)}{\mu_y^4 \mu_z^4},$$
and we have:
$$Var(YZ)\approx \mu_y^2 Var(Z)+\mu_z^2 Var(Y)+Var(Y)Var(Z).$$
These relations come from the Taylor series approximation of estimators \cite{kroese2013handbook}. Thus, we get:
$$Var\left(\frac{X}{YZ}\right)=\frac{\mu_y^2\mu_z^2 Var(X)+\mu_x^2(\mu_y^2 Var(Z)+\mu_z^2 Var(Y)+Var(Y)Var(Z)))}{\mu_y^4 \mu_z^4}.$$
We can ask $ Var(X)=\epsilon^2\mu_x^2/3$, $ Var(Y)=\epsilon^2\mu_y^2/3$, and $ Var(Z)=\epsilon^2\mu_z^2/3$ with $\epsilon\ll 1$, then we have:
$$Var\left(\frac{X}{YZ}\right)=\frac{\mu_y^2\mu_z^2 Var(X)+\mu_x^2(\mu_y Var(Z)+\mu_z Var(Y)+Var(X)Var(Y)))}{\mu_y^4 \mu_z^4}\approx \frac{\epsilon^2\mu_x^2}{ \mu_y^2 \mu_z^2}.$$
Thus, the lemma is proved. Note that, in this proof, there is no need to consider bias for small enough relative errors since the variance takes the main portion and is quadratically larger than the bias.
\end{proof}

Based on this lemma, and the form of Eq. \eqref{g1}, Eq. \eqref{g2}, Eq. \eqref{g3}, and Eq. \eqref{g4}, to estimate the amplitude $\langle B|A\rangle$ to a relative error $\epsilon$, it is sufficient to estimate $\gamma$ and $\lambda$ to a relative error $\epsilon$ where we omit constant factors.

To estimate $\gamma$ in $\rho_A$ ($(1+n,0,1,\gamma)$-DMSE), we have the following relation:
\begin{eqnarray}\label{gamma}
\gamma^2=\frac{\Tr(((|0 1\rangle\langle 10|+|10\rangle\langle 01| )\otimes\text{SWAP}_{n+n})\rho_A\otimes\rho_A)}{2}.
\end{eqnarray}
Note that in the expression, to enable concise presentations, the index order of the operator side is $(1,1,n,n)$ while the index order of the density matrix side is $(1,n,1,n)$.
\begin{proof}
\begin{eqnarray}
&&\Tr(((|0 1\rangle\langle 10|+|10\rangle\langle 01| )\otimes\text{SWAP}_{n+n})\rho_A\otimes\rho_A)\nonumber\\&&=\Tr((|0 1\rangle\langle 10| \otimes\text{SWAP}_{n+n})(|1 0\rangle\langle 01|\otimes \gamma^2(A^\dag\otimes A)))\nonumber\\&&+\Tr((|1 0\rangle\langle 01| \otimes\text{SWAP}_{n+n})(|01 \rangle\langle 10|\otimes \gamma^2(A\otimes A^\dag)))\nonumber\\&&=\gamma^2 \Tr(\text{SWAP}_{n+n}(A^\dag\otimes A))+\gamma^2 \Tr(\text{SWAP}_{n+n}(A\otimes A^\dag))\nonumber\\&&=2\gamma^2\langle A|A\rangle=2\gamma^2\nonumber.
\end{eqnarray}
\end{proof}
Since $\gamma\geq0$, the RHS of Eq. \eqref{gamma} can be used to estimate its value. Here, we can simply use the operator averaging method \cite{mcclean2016theory} to do the estimation where we need to apply a rotation unitary to rotate $\rho_A\otimes\rho_A$ to the diagonal basis of $((|0 1\rangle\langle 10|+|10\rangle\langle 01| )\otimes\text{SWAP}_{n+n})$ and do repeated measurements for the evaluation. Since $((|0 1\rangle\langle 10|+|10\rangle\langle 01| )\otimes\text{SWAP}_{n+n})$ has a spectrum within the range $[-1,1]$ and a $\epsilon$-relative estimation of $\gamma$ corresponds to a $4\gamma^2\epsilon $-additive estimation of $\Tr(((|0 1\rangle\langle 10|+|10\rangle\langle 01| )\otimes\text{SWAP}_{n+n})\rho_A\otimes\rho_A)$ according to the error propagation rule \cite{kroese2013handbook}, the sampling complexity to achieve the purpose is of order $\mathcal{O}(\gamma^{-4}\epsilon^{-2})$. The above procedures can be easily generalized to $|S_A\rangle$ with similar conclusions. Again, one can use advanced measurement techniques like the amplitude estimation here to reduce the complexity to the Heisenberg limit.

To estimate $\lambda$ in $U_B$, we have the following relation:
\begin{eqnarray}\label{lambda}
\lambda^2=2^{2n}\Tr\left(((U_B\otimes U_B^\dag) (\text{SWAP}_{n+n} \otimes I_k\otimes I_k)) \rho_{I0}\otimes \rho_{I0}\right),
\end{eqnarray}
with $\rho_{I0}=\frac{1}{2^n}I_n\otimes|0\rangle\langle 0|_k$. Note that in the expression, to enable concise presentations, the index order of the operator side is $(n,n,k,k)$ while the index order of the density matrix side is $(n,k,n,k)$.
\begin{proof}
\begin{eqnarray}
&&\Tr\left(((U_B\otimes U_B^\dag) (\text{SWAP}_{n+n} \otimes I_k\otimes I_k)) \left(\left(\frac{1}{2^n}I_n\right)\otimes \left(\frac{1}{2^n}I_n\right)\otimes |0\rangle\langle 0|_k\otimes |0\rangle\langle 0|_k\right)\right)\nonumber\\=&&\Tr\left(\lambda^2 (B\otimes B^\dag\text{SWAP}_{n+n}\left(\frac{1}{2^{2n}}I_{n+n}\right)\right)=\frac{\lambda^2}{2^{2n}}\Tr(B^\dag B)=\frac{\lambda^2}{2^{2n}}\langle B|B\rangle=\frac{\lambda^2}{2^{2n}}\nonumber.
\end{eqnarray}
\end{proof}
Since $\lambda\geq0$, the RHS of Eq. \eqref{lambda} can be used to estimate its value. Since $((U_B\otimes U_B^\dag) (\text{SWAP}_{n+n} \otimes I_k\otimes I_k))$ is unitary, we can again use the Hadamard test to do the estimation. A $\epsilon$-relative estimation of $\lambda$ corresponds to a $2^{-(2n-1)}\lambda^2\epsilon $-additive estimation of $\Tr\left(((U_B\otimes U_B^\dag) (\text{SWAP}_{n+n} \otimes I_k\otimes I_k)) \rho_{I0}\otimes \rho_{I0}\right)$ according to the error propagation rule, the sampling complexity to achieve the purpose is of order $\mathcal{O}(2^{4n}\lambda^{-4}\epsilon^{-2})$. Again, one can use amplitude estimation to further reduce the complexity. Note that, in most cases that we are considering in this work, $\lambda=\mathcal{O}(2^{n/2})$, thus, the sampling complexity is inevitably exponential. 

\section{About $U_B$}
In our setting, the state $|B\rangle$ serves as a basis state on which we want to project the state $|A\rangle$. Thus, we require $|B\rangle$ is a known state and show basic constructions of $U_B$.

\subsection{Fundamental limit}
Since $|B\rangle$ has the Schmidt form: $|B\rangle=\sum_i \beta_i |b_{ui}\rangle|b_{li}\rangle$, it is easy to see that $\lambda$ in $U_B$ has the upper bound in Theorem \ref{ubl} due to the restrictions of block encoding \cite{lin2022lecture}. This theorem connects the upper bound of $\lambda$ with the entanglement Rényi entropy of $|B\rangle$ between US and LS: $|B\rangle$
with large entanglement has large achievable $\lambda$ and $|B\rangle$ with small entanglement has large achievable $\lambda$. For example,
when $|B\rangle = |\psi_1\rangle|\psi_2\rangle$ is a product state, $\lambda$ has the smallst value $1$. On the other hand, when $|B\rangle$ is a product
of Bell states (i.e. 1st and $(n + 1)$th qubits form a Bell state, 2nd and $(n + 2)$th qubits form a Bell state...) which corresponds to the case where all $\beta_i$ are equal, it has the
largest entanglement, $\lambda$ can have the largest value $2^{n/2}$.

\subsection{Bell basis \label{bbss}}
A particularly interesting class of basis states are the Bell basis states which are the states considered in the main text. 
\begin{definition}[Bell basis]
A $2n$-qubit Bell basis state has the form:
$|B_b\rangle=|b_1\rangle|b_2\rangle...|b_n\rangle$
where $|b_i\rangle$ is one of 2-qubit Bell states acting on the $i$th (US) and $(n+i)$th (LS) qubits.
\end{definition}
Since for the two-qubit case, we have the following correspondence:
\begin{eqnarray}
|\Phi^+\rangle&&=\frac{1}{\sqrt{2}}(|0\rangle|0\rangle+|1\rangle|1\rangle)\xrightarrow[]{\mathcal{M}}\frac{1}{\sqrt{2}}I\nonumber,\\
|\Phi^-\rangle&&=\frac{1}{\sqrt{2}}(|0\rangle|0\rangle-|1\rangle|1\rangle)\xrightarrow[]{\mathcal{M}}\frac{1}{\sqrt{2}}Z\nonumber,\\
|\Psi^+\rangle&&=\frac{1}{\sqrt{2}}(|0\rangle|1\rangle+|1\rangle|0\rangle)\xrightarrow[]{\mathcal{M}}\frac{1}{\sqrt{2}}X\nonumber,\\
|\Psi^-\rangle&&=\frac{1}{\sqrt{2}}(|0\rangle|1\rangle-|1\rangle|0\rangle)\xrightarrow[]{\mathcal{M}}\frac{i}{\sqrt{2}}Y.
\end{eqnarray}
Thus, each Bell basis state $|B_b\rangle$ is mapped to a $n$-qubit Pauli operator. Since Pauli operators are unitary, they are directly the UBSE of $|B_b\rangle$ with the largest $\lambda=2^{n/2}$.

\subsection{$U_B$ by unitary decomposition}\label{ape8}
From Theorem \ref{ubl}, it is easy to find that the Pauli operators can be generalized to an arbitrary $n$-qubit unitary operator $V$ which is a UBSE of a $2n$-qubit maximally entangled state $|v\rangle=2^{-n/2}\mathcal{V}[V]$ with the largest $\lambda=2^{n/2}$. To have advantages on the amplitude estimation by our method, we hope $\lambda$ to be as large as possible. Thus, it is natural to set $|B\rangle$ as a small number of linear combinations of maximally entangled states:
\begin{eqnarray}\label{bform}
|B\rangle=\sum_{i=1}^{2^k} c_i |v_i\rangle,
\end{eqnarray}
and we know this decomposition priorly.

Now, the task is to construct $U_B$, which can be done by the LCU method \cite{low2019hamiltonian} and leads to the following theorem:
\begin{lemma}[$U_B$ by LCU\label{cccb}]
Define a $k$-qubit unitary $T$ satisfies:
$$T|0\rangle_k= \frac{1}{\sqrt{\sum_j |c_j|}}\sum_{i=1}^{2^k} \sqrt{|c_i|}|i\rangle_k,$$
and a $n+k$-qubit unitary $Q$:
$$Q=\sum_{i=1}^{2^k}  \frac{c_i}{|c_i|}V_i\otimes |i\rangle\langle i|_d ,$$
with $V_i=2^{n/2}\mathcal{M}[|v_i\rangle]$.
Then the $n+k$-qubit unitary $U_B=(I_n\otimes T^\dag) Q(I_n\otimes T)$ is a $(n + k, \lambda)$-UBSE of $|B\rangle$ with $\lambda=2^{n/2}\||B\rangle\|_1^{-1}\geq 2^{(n-k)/2}$ where $\||B\rangle\|_1=\sum_{i=1}^{2^k} |c_i|\leq 2^{k/2}$ denotes the vector 1-norm of $|B\rangle$.
\end{lemma}
\begin{proof}
\begin{eqnarray}
(I_n\otimes \langle 0|_k)(I_n\otimes T^\dag) Q(I_n\otimes T)(I_n\otimes | 0\rangle_k)&&=\frac{1}{\||B\rangle\|_1}(I_n\otimes \sum_{i=1}^{2^k} \sqrt{|c_i|} \langle i|_k)Q(I_n\otimes \sum_{j=1}^{2^k} \sqrt{|c_j|} | j\rangle_k)\nonumber\\&&=\frac{\sum_{i=1}^{2^k} c_i V_i}{\||B\rangle\|_1}=\frac{2^{n/2}}{\||B\rangle\|_1}B\nonumber.
\end{eqnarray}
\end{proof}

It is worth noting that given the description of Eq. \ref{bform}, the gate cost of constructing $U_B$ equals the optimal gate cost of preparing $|B\rangle$ \cite{zhang2022quantum}. 
Here, we only consider an LCU-based UBSE. It would be interesting to see if other block encoding methods \cite{low2019hamiltonian,camps2024explicit} will lead to more efficient UBSE protocols and get benefits from other access models beyond Eq. \eqref{bform}. It is possible to use the methods in Ref. \cite{guo2021nonlinear,rattew2023non} to construct $U_B$ for any $|B\rangle$, however, this can only give $\lambda=1$ leading to no improvements to the amplitude estimation.

\section{About $\rho_A$ ($|S_A\rangle$)}
Given the state $|A\rangle$, it is crucial to ask how to prepare the state $\rho_A$ ($|S_A\rangle$) and how efficient the preparation can be. From the previous section, we have seen that to make our method have advantages over traditional direct amplitude measurement methods, we hope to make $\gamma$ as large as possible. Not only this can reduce the complexity $T_{s1}$ and $T_{h1}$, but also this can reduce the complexity of estimating $\gamma$ when it is unknown. At the same time, the complexity of preparing $\rho_A$ ($|S_A\rangle$) is also important since this is hidden in $T_{s1}$ and $T_{h1}$ and a high such complexity can cancel out the advantage of our method. In this section, we give a thorough discussion on various aspects on constructing $\rho_A$ ($|S_A\rangle$).

\subsection{Fundamental limit}\label{ape4}
Suppose $A$ has the singular value decomposition $A=W_A^\dag\Sigma_A V_A$ where $W_A$ and $V_A$ are unitary and $\Sigma_A$ is a diagonal, Hermitian, and positive semi-definite matrix, it is then easy to check that the diagonal elements of $\Sigma_A$ are exactly Schmidt coefficients $\{\alpha_i\}$ of $|A\rangle$: $|A\rangle=\sum_i \alpha_i |a_{ui}\rangle|a_{li}\rangle$ with $\langle a_{ui}|a_{uj}\rangle=\langle a_{li}|a_{lj}\rangle=\delta_{ij}$. Following this, we obtain the theorem below:
\begin{theorem}[A detailed statement of Theorem \ref{ubga}]\label{ubgamma}
Given a state $|A\rangle$, the value of $\gamma$ in $\rho_A$ ($|S_A\rangle$) has the upper bound:
$$\gamma \leq \frac{1}{2 \|A\|_{1}}= 2^{-\frac{H_{1/2}(|A\rangle)}{2}-1},$$
where $\|\cdot\|_1$ is the trace norm (sum of singular values) and $H_{1/2}(|A\rangle):=2\log_2(\sum_i \alpha_i)$ is the $1/2$-Rényi entropy of $|A\rangle$ under the partition between US and LS. $\rho_A$ achieving this bound has the form:
\begin{eqnarray}
\rho_A=\begin{pmatrix}
\gamma W_A^\dag  \Sigma_A W_A & \gamma A \\
\gamma A^\dag & \gamma V_A  \Sigma_A V_A^\dag
\end{pmatrix}.
\end{eqnarray}
\end{theorem}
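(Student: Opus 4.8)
The plan is to extract the bound from the positive-semidefiniteness of $\rho_A$ alone, reducing the claim to a clean matrix inequality, and to prove tightness by an explicit congruence factorization. By the DMSE condition together with Hermiticity, every admissible $(1+n,0,1,\gamma)$-encoding has the block form
\begin{equation}
\rho_A=\begin{pmatrix}P & \gamma A\\ \gamma A^\dag & R\end{pmatrix},\qquad P,R\succeq0,\qquad \Tr(P)+\Tr(R)=\Tr(\rho_A)=1.
\end{equation}
Since $\|\gamma A\|_1=\gamma\|A\|_1$, maximizing $\gamma$ amounts to bounding the trace norm of the off-diagonal block of a PSD matrix by the traces of its diagonal blocks.

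First I would prove the inequality $\|Q\|_1\le\sqrt{\Tr(P)\Tr(R)}$ for any PSD block matrix with off-diagonal block $Q$. Taking the singular value decomposition $Q=\sum_i\sigma_i|u_i\rangle\langle v_i|$, each singular value is $\sigma_i=\langle u_i|Q|v_i\rangle$, and the $2\times2$ compression of the PSD matrix onto $\mathrm{span}\{(|u_i\rangle,0),(0,|v_i\rangle)\}$ is itself PSD, so its $2\times2$ determinant is nonnegative, giving $\sigma_i\le\sqrt{\langle u_i|P|u_i\rangle\,\langle v_i|R|v_i\rangle}$. Summing over $i$ and applying Cauchy--Schwarz, together with $\sum_i\langle u_i|P|u_i\rangle\le\Tr(P)$ and $\sum_i\langle v_i|R|v_i\rangle\le\Tr(R)$ (orthonormality of the singular vectors and $P,R\succeq0$), yields the inequality. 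Combining with AM--GM, $\sqrt{\Tr(P)\Tr(R)}\le\tfrac12(\Tr(P)+\Tr(R))=\tfrac12$, so $\gamma\|A\|_1\le\tfrac12$, i.e. $\gamma\le 1/(2\|A\|_1)$. The Rényi form is then the algebraic identity $1/(2\|A\|_1)=1/(2\sum_i\alpha_i)=2^{-\log_2(\sum_i\alpha_i)-1}=2^{-H_{1/2}(|A\rangle)/2-1}$, using that the singular values $\alpha_i$ of $A$ are the Schmidt coefficients of $|A\rangle$, whence $\|A\|_1=\sum_i\alpha_i$.

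To prove tightness and pin down the optimal $\rho_A$, I would give a Gram-matrix factorization that makes positivity manifest. With the SVD $A=W_A^\dag\Sigma_A V_A$, set $G_1=\sqrt{\gamma}\,\Sigma_A^{1/2}W_A$ and $G_2=\sqrt{\gamma}\,\Sigma_A^{1/2}V_A$ and take
\begin{equation}
\rho_A=\begin{pmatrix}G_1^\dag\\ G_2^\dag\end{pmatrix}\begin{pmatrix}G_1 & G_2\end{pmatrix}\succeq0.
\end{equation}
A direct computation gives off-diagonal block $G_1^\dag G_2=\gamma W_A^\dag\Sigma_A V_A=\gamma A$ and diagonal blocks $G_1^\dag G_1=\gamma W_A^\dag\Sigma_A W_A=\gamma\sqrt{AA^\dag}$ and $G_2^\dag G_2=\gamma V_A^\dag\Sigma_A V_A=\gamma\sqrt{A^\dag A}$, reproducing the claimed optimal form with the diagonal blocks equal to the polar factors of $\gamma A$. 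Its trace equals $2\gamma\Tr(\Sigma_A)=2\gamma\|A\|_1$, so normalization forces $\gamma=1/(2\|A\|_1)$, saturating the bound; here $\Tr(P)=\Tr(R)$ and $\|Q\|_1=\gamma\|A\|_1=\sqrt{\Tr(P)\Tr(R)}$, so both the AM--GM and Cauchy--Schwarz steps are tight, confirming optimality. Because $\gamma$ is a property of $\rho_A$, the identical bound holds for any purification $|S_A\rangle$.

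The main obstacle I anticipate is the upper-bound step: I must route around the Schur-complement/contraction characterization $Q=P^{1/2}KR^{1/2}$, which is delicate when $P$ or $R$ is singular (vanishing Schmidt coefficients), and the singular-value-plus-$2\times2$-compression argument is precisely what sidesteps pseudoinverses while keeping every estimate tight. A secondary, purely bookkeeping point is aligning the SVD dagger conventions so that the factorization reproduces the diagonal blocks in exactly the stated form; writing them as $\gamma\sqrt{AA^\dag}$ and $\gamma\sqrt{A^\dag A}$ makes this convention-free.
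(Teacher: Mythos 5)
Your proof is correct and takes essentially the same route as the paper's: both rest on positivity of $2\times 2$ compressions along the singular directions of the off-diagonal block (the paper implements this by conjugating $\rho_A$ with $\mathrm{diag}(W_A,V_A)$ so that block becomes $\gamma\Sigma_A$, you by compressing onto singular-vector pairs), followed by AM--GM, and the saturating state is the same matrix, which you present as a Gram factorization and the paper as the conjugated block-constant matrix $\begin{pmatrix}\gamma\Sigma_A & \gamma\Sigma_A\\ \gamma\Sigma_A & \gamma\Sigma_A\end{pmatrix}$. Incidentally, your convention-free diagonal blocks $\gamma\sqrt{AA^\dag}$ and $\gamma\sqrt{A^\dag A}$ (i.e.\ $\gamma V_A^\dag\Sigma_A V_A$) agree with the paper's proof and show that the lower-right block $\gamma V_A\Sigma_A V_A^\dag$ in the theorem statement has its daggers transposed relative to the convention $A=W_A^\dag\Sigma_A V_A$.
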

\begin{proof}
We can do a basis transformation to $\rho_A$:
\begin{eqnarray}
\begin{pmatrix}
W_A& 0 \\
0 & V_A
\end{pmatrix}\rho_A\begin{pmatrix}
W_A^\dag& 0 \\
0 & V_A^\dag
\end{pmatrix}=\begin{pmatrix}
\cdot & \gamma W_A A V_A^\dag \\
\gamma V_A A^\dag W_A^\dag & \cdot
\end{pmatrix}=\begin{pmatrix}
\cdot & \gamma \Sigma_A \\
\gamma \Sigma_A & \cdot
\end{pmatrix}\nonumber
.\end{eqnarray}
Under this basis, we should have $\gamma \alpha_i\leq \sqrt{p_{0i}p_{1i}}$ with $p_{0i}$ the diagonal probability that shares the same row with $\gamma \alpha_i$ in the upper right block and $p_{1i}$ the diagonal probability that shares the same column with $\gamma \alpha_i$ in the upper right block. The reason for this is that this 2-dimensional subspace $\{0i, 1i\}$ must be an un-normalized density matrix due to the definition of the density matrix. Thus, we have:
\begin{eqnarray}
&&\sum_i\gamma \alpha_i \leq \sum_i\sqrt{p_{0i}p_{1i}}\leq \sum_i\frac{p_{0i}+p_{1i}}{2}=\frac{1}{2}\nonumber\\&&\rightarrow \gamma \leq \frac{1}{2 \|A\|_{1}}=2^{-\frac{H_{1/2}(|A\rangle)}{2}-1}\nonumber
.\end{eqnarray}
If $\gamma$ reaches the bound, then $\begin{pmatrix}
\gamma \Sigma_A & \gamma \Sigma_A \\
\gamma \Sigma_A & \gamma \Sigma_A
\end{pmatrix}$ is a legal density matrix. Thus, $\rho_A$ achieving the bound can be obtained:
\begin{eqnarray}\label{ubcons}
\rho_A=\begin{pmatrix}
W_A^\dag & 0 \\
0 & V_A^\dag
\end{pmatrix} \begin{pmatrix}
\gamma \Sigma_A & \gamma \Sigma_A \\
\gamma \Sigma_A & \gamma \Sigma_A
\end{pmatrix} \begin{pmatrix}
W_A & 0 \\
0 & V_A
\end{pmatrix}=\begin{pmatrix}
\gamma W_A^\dag \Sigma_A W_A & \gamma A \\
\gamma A^\dag & \gamma V_A  \Sigma_A V_A^\dag
\end{pmatrix}\nonumber
.\end{eqnarray}
\end{proof}
This theorem connects the upper bound of $\gamma$ with the entanglement Rényi entropy of $|A\rangle$ between US and LS: $|A\rangle$ with large entanglement has small achievable $\gamma$ and $|A\rangle$ with small entanglement has large achievable $\gamma$. For example, when $|A\rangle=|\psi_1\rangle|\psi_2\rangle$ is a product state, $\gamma$ can have the largest value $1/2$. On the other hand, when $|A\rangle$ is a product of Bell states (i.e. 1st and $(n + 1)$th qubits form a Bell state, 2nd and $(n + 2)$th qubits form a Bell state...), it has the largest entanglement, $\gamma$ can not be larger than $2^{-n/2-1}$.

When we have prior knowledge of the Schmidt decomposition of $|A\rangle$, we can then use the construction procedure Eq. \eqref{ubcons} to prepare $\rho_A$ achieving the upper bound. However, in general cases, $|A\rangle$ is unknown and is prepared by a unitary operator from an initial state, thus, we need tools to construct quantum operations in the vectorization picture.

\subsection{Channel block encoding}\label{ape5}
Here, we give a general construction of $\rho_A$ by a new technique which we call the channel block encoding (CBE). Here and after, we will only talk about constructions of $\rho_A$, since for any quantum channel, we can always use the Stinespring dilation \cite{stinespring1955positive} to contain the channel in a unitary operator ($U_{SA}$) of an enlarged system (The system of $|S_A\rangle$).

We consider applying a quantum channel $\mathcal{C}[\cdot]$ to an initial $n+1$-qubit density matrix $\rho_{ini}$ with the form:
\begin{equation}\label{qiuqiu}
\mathcal{C}[\rho_{ini}]=\sum_i \begin{pmatrix}
K_i & 0\\
0& L_i
\end{pmatrix}\rho_{ini} \begin{pmatrix}
K_i^\dag & 0\\
0& L_i^\dag
\end{pmatrix},
\end{equation}
where $\{K_i\}$ and $\{L_i\}$ are two sets of Kraus operators satisfying $\sum_i K_i^\dag K_i=\sum_i L_i^\dag L_i=I_n$. Under the vectorization, we have:
\begin{equation}
\mathcal{V}[\mathcal{C}[\rho_{ini}]]=C\ket{\ket{\rho_{ini}}}=\sum_i \begin{pmatrix}
K_i\otimes K_i^* & 0 &0 &0\\
0& K_i\otimes L_i^* & 0 &0 \\
0& 0& L_i\otimes K_i^* & 0 \\
 0 &0 &0& L_i\otimes L_i^*  \\
\end{pmatrix}\begin{pmatrix}
\ket{\ket{\rho_{ini,00}}} \\\ket{\ket{\rho_{ini,01} }}\\\ket{\ket{\rho_{ini,10}}}\\\ket{\ket{\rho_{ini,11}}} \\
\end{pmatrix},
\end{equation}
where $C$ is the matrix representation of the channel $\mathcal{C}[\cdot]$ in the vectorization picture. Focusing on the upper right block $\rho_{ini,01}$, we have:
\begin{eqnarray}
\ket{\ket{\rho_{ini,01} }}\rightarrow \sum_i K_i\otimes L_i^* \ket{\ket{\rho_{ini,01} }}.
\end{eqnarray}
Thus, the operation $\sum_i K_i\otimes L_i^*$ is the action we can do in the vectorization picture. 

In fact, $\sum_i K_i\otimes L_i^*$ is universal in the sense that we add express any operators in this form up to a normalization factor. To see this, we can consider a channel with a more restricted form: 
\begin{eqnarray}\label{ucbe}
\mathcal{C}[\rho_{ini}]=\sum_i p_i\begin{pmatrix}
W_i & 0\\
0& V_i
\end{pmatrix}\rho_{ini} \begin{pmatrix}
W_i^\dag & 0\\
0& V_i^\dag
\end{pmatrix},
\end{eqnarray}
with $\{W_i,V_i\}$ unitary operators and $\sum_i p_i=1$. We will call such restricted forms the unitary channel block encoding (UCBE). Under the vectorization,  $\rho_{ini,01}$ is transformed to:
\begin{eqnarray}\label{uucbe}
\ket{\ket{\rho_{ini,01} }}\rightarrow \sum_i p_i W_i\otimes V_i^* \ket{\ket{\rho_{ini,01} }}.
\end{eqnarray}
The action $\sum_i p_i W_i\otimes V_i^*$ is then a natural linear combination of unitaries. Since any $2n$-qubit Pauli operator is a tensor product of two $n$-qubit Pauli operators, thus, the form of the action $\sum_i p_i W_i\otimes V_i^*$ means that, for any $2n$-qubit operator, we can always express it as a linear combination of $2n$-qubit Pauli operators and encode it into a $n+1$-qubit quantum channel. We want to mention here that the condition $\{p_i\}$ are positive is not a restriction since for any term with the form like $|p_i|e^{i\theta_i} W_i\otimes V_i^*$, we can always let the unitaries absorb the phase. For example, given an $2n$-qubit operator $O$ with the form:
\begin{equation}\label{ode}
O=\sum_i g_i P_i\otimes Q_i,
\end{equation}
with $g_i=|g_i|e^{i\theta_i}$ and $\{P_i\}$ and $\{Q_i\}$ $n$-qubit Pauli operators, we can build a $n+1$-qubit quantum channel $\mathcal{C}_o[\cdot]$ of the form Eq. \eqref{ucbe} and set $W_i=e^{i\theta_i} P_i$, $V_i^*=Q_i$, and $p_i=|g_i|/\sum_i |g_i|$, then the channel is a CBE of $O$ satisfies:
\begin{equation}\label{ocbe}
(\langle 01|\otimes I)C_o(|01\rangle\otimes I)=\frac{O}{\|O\|},
\end{equation}
with $\|O\|:=\sum_i |g_i|$. We want to emphasize here that while Pauli basis CBE can encode arbitrary operators, it might be far from optimal for some operators. The formal definition of CBE is summarized. 
\begin{definition}[Channel block encoding (CBE)]\label{cbe}
Given a $2n$-qubit operator $O$, if we can find an $l+n$-qubit quantum channel of the form Eq. \eqref{qiuqiu} whose matrix form $C$ satisfies the condition:
$$\|\eta O-(\langle s|\otimes I)C(|s\rangle\otimes I)\|\leq\varepsilon,$$
with the encoding efficiency $\eta\geq0$, a binary string $s$ and an $l$-qubit computational basis $|s\rangle$, then the channel is an $(l+n, s,\eta,\varepsilon)$-CBE of $O$.
\end{definition}
\noindent For example, Eq. \eqref{ocbe} indicates that the channel $\mathcal{C}_o[\cdot]$ is a $(1+n,01,\sum_i |g_i|,0)$-CBE of $O$. 

Having the tools of CBE, we can therefore build various actions to quantum states (in the vectorization picture). In most cases, $|A\rangle$ is prepared by a quantum process such as a quantum circuit or a Hamiltonian simulation. The basic workflow to convert $|A\rangle$ to $\rho_A$ is then try to use CBE to rebuild the quantum process in the vectorization picture to get $\rho_A$ which we will give detailed discussions in the following. 

Now, we discuss using CBE to construct unitaries. Starting from an initial state $\gamma_0 |A_0\rangle$ encoded in $\rho_{A_0}$, if we have $|A\rangle=V|A_0\rangle$ and the unitary $V$ is realized by a CBE with $\eta_v$, then $\rho_A$ encodes $|A\rangle$ with $\gamma=\eta_v\gamma_0$. Since both $\gamma$ and $\gamma_0$ have upper bounds decided by $|A\rangle$ and $|A_0\rangle$, it is thus simple to give an upper bound for $\eta_v$ of $V$, which gives the Corollary \ref{coro1}. Therefore, the upper bound of $\eta_v$ is connected with the power of entanglement generation of $V$. Since $\gamma=\eta_v\gamma_0$, if the initial state is a product state ($\gamma_0=1/2$), this indicates that we require the entanglement power of $V$ should be relatively small to make $\gamma$ large. The reason that $H_{1/2}(V)$ is defined allowing local ancilla qubits and initial entanglement of $|\psi\rangle$ since these resources can enlarge the entangling power compared with the restricted setting with only product initial state with no ancillas \cite{leifer2003optimal,nielsen2003quantum}. Note that this corollary only restricts the entanglement power of $V$ between US and LS but does not restrict its power inside US and LS.  

While Corollary \ref{coro1} tells us the best efficiency we can expect when using CBE to encode a unitary, it doesn't tell us if it is possible to find and how to find a channel reaching this bound. In fact, given any operator $V$ ($O$), there can be various channels that are CBE of $V$ ($O$) with different efficiencies. Again, since $\gamma=\eta_v\gamma_0$, it is important to find constructions with high efficiencies close to the upper bound. For general unitaries, finding optimal constructions is highly complicated, and no efficient methods are known. In the following, we will discuss strategies for using CBE to build quantum circuits and Hamiltonian simulations and give optimal CBE constructions for individual gates and Trotter terms.

\subsection{CBE for circuit (Theorem \ref{cbec})}\label{ape6}
First, we formally summarize three measures of the entangling power of unitaries including $H_{1/2}(V)$: 
\begin{definition}[Entangling power of unitary operators]$\\$\label{ep}
The first definition of the entangling power of $V$ is $H_{1/2}(V)$ defined as the maximum entanglement generations in terms of $1/2$-Rényi entropy over all possible initial states allowing ancilla qubits and initial entanglement):
$$H_{1/2}(V):=\sup_{|\psi\rangle}|H_{1/2}(V|\psi\rangle)-H_{1/2}(|\psi\rangle)|.$$

The second definition of the entangling power of $V$ is $H_{1/2,p}(V)$ defined as the maximum entanglement generations in terms of $1/2$-Rényi entropy over all possible product initial states (between US and LS) allowing ancilla qubits:
$$H_{1/2,p}(V):=\sup_{|\psi\rangle=|\psi_u\rangle|\psi_l\rangle}|H_{1/2}(V|\psi\rangle)-H_{1/2}(|\psi\rangle)|.$$

The third definition of the entangling power of $V$ is $H_{1/2,s}(V)$ defined as the $1/2$-Rényi entropy in terms of the operator Schmidt decomposition of $V$:
$$H_{1/2,s}(V):=2\log_2(\sum_i \frac{s_i}{\sqrt{\sum_i s_i^2}})=2\log_2(\sum_i s_i)-2n,$$
where the operator Schmidt decomposition \cite{nielsen2003quantum} $V=\sum_i s_i X_i\otimes Y_i$ satisfies $s_i\geq 0$ and $\Tr(X_i^\dag X_j)=\Tr(Y_i^\dag Y_j)=\delta_{ij}$. The second equality is because $\Tr(V^\dag V)=\sum_i s_i^2=2^{2n}$.
\end{definition}
The relations between these three measures are summarized by the following lemma:
\begin{lemma}[Relations between measures]\label{epl}
$$H_{1/2}(V)\geq H_{1/2,p}(V)\geq H_{1/2,s}(V).$$
\end{lemma}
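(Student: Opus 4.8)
The plan is to treat the two inequalities separately: the first is a set-inclusion argument, while the second requires exhibiting one explicit product input that already saturates $H_{1/2,s}(V)$.

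For $H_{1/2}(V)\ge H_{1/2,p}(V)$ I would simply observe that the supremum defining $H_{1/2,p}(V)$ ranges over the product states $\ket{\psi_u}\ket{\psi_l}$ (with ancillas), which form a subset of the states $\ket{\psi}$ (with ancillas and arbitrary initial entanglement) over which the supremum defining $H_{1/2}(V)$ ranges. Since both functionals optimize the very same quantity $|H_{1/2}(V\ket{\psi})-H_{1/2}(\ket{\psi})|$, enlarging the feasible set can only raise the supremum, giving the claim immediately.

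For the harder inequality $H_{1/2,p}(V)\ge H_{1/2,s}(V)$, the idea is to produce a single product input whose entanglement generation equals $H_{1/2,s}(V)$; because $H_{1/2,p}(V)$ is itself a supremum, this is enough. I would append an $n$-qubit ancilla $A_u$ to the US and an $n$-qubit ancilla $A_l$ to the LS, and take the initial state $\ket{\psi}=\ket{\Phi}_{US,A_u}\otimes\ket{\Phi}_{LS,A_l}$ with $\ket{\Phi}=2^{-n/2}\sum_j\ket{j}\ket{j}$ the maximally entangled state. This state is manifestly a product across the upper side $(US,A_u)$ versus the lower side $(LS,A_l)$, hence admissible for $H_{1/2,p}$, and it is unentangled across that cut, so $H_{1/2}(\ket{\psi})=0$. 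Applying $V=\sum_i s_i X_i\otimes Y_i$ on the US--LS registers only and writing $\ket{\phi_i^u}=(X_i\otimes I_{A_u})\ket{\Phi}_{US,A_u}$, $\ket{\phi_i^l}=(Y_i\otimes I_{A_l})\ket{\Phi}_{LS,A_l}$, one gets $(V\otimes I)\ket{\psi}=\sum_i s_i\ket{\phi_i^u}\ket{\phi_i^l}$. The key tool is the overlap identity $\braket{\Phi|(M\otimes I)|\Phi}=2^{-n}\Tr(M)$, which together with the operator-Schmidt orthonormality $\Tr(X_i^\dag X_j)=\Tr(Y_i^\dag Y_j)=\delta_{ij}$ yields $\braket{\phi_i^u|\phi_j^u}=2^{-n}\delta_{ij}$ and likewise on the lower side. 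Thus $(V\otimes I)\ket{\psi}$ is already a Schmidt decomposition across the upper/lower cut with coefficients $\sigma_i=2^{-n}s_i$, and $\sum_i\sigma_i^2=2^{-2n}\sum_i s_i^2=1$ by the normalization $\Tr(V^\dag V)=2^{2n}$. Reading off the $1/2$-Rényi entropy gives $H_{1/2}((V\otimes I)\ket{\psi})=2\log_2(\sum_i\sigma_i)=2\log_2(\sum_i s_i)-2n=H_{1/2,s}(V)$, so the generated entanglement equals $H_{1/2,s}(V)$ exactly.

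The main obstacle is organizational rather than deep: one must be careful that the bipartition used to measure entanglement groups each system register with its own-side ancilla (US with $A_u$, LS with $A_l$), so that $\ket{\psi}$ genuinely counts as a product state between US and LS while $V$ can nonetheless generate the full operator-Schmidt spectrum through the two maximally entangled ``reference'' copies. I would also record in passing that $\sum_i s_i\ge\sqrt{\sum_i s_i^2}=2^n$ guarantees $H_{1/2,s}(V)\ge 0$, so the absolute values in the definitions cause no sign issues.
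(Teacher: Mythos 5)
Your proposal is correct and follows essentially the same route as the paper's proof: the first inequality by set inclusion of the optimization domains, and the second by feeding $V$ the product of two maximally entangled ancilla states and using $\langle\Phi|(M\otimes I)|\Phi\rangle = 2^{-n}\Tr(M)$ together with operator-Schmidt orthonormality to read off the Schmidt coefficients $2^{-n}s_i$. Your write-up is, if anything, slightly more careful than the paper's about normalizing the Schmidt vectors and checking $\sum_i \sigma_i^2 = 1$, but the underlying argument is identical.
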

\begin{proof}
The relation $H_{1/2}(V)\geq H_{1/2,p}(V)$ is obvious due to the definition of these two measures. For $H_{1/2,p}(V)\geq H_{1/2,s}(V)$, it is sufficient to prove there exists a product state $|\psi\rangle=|\psi_u\rangle|\psi_l\rangle$ such that $|H_{1/2}(V|\psi\rangle)-H_{1/2}(|\psi\rangle)|=H_{1/2,s}(V)$ \cite{nielsen2003quantum}. When $|\psi_u\rangle=2^{-n/2}\sum_k|k\rangle|k\rangle$ ($|\psi_l\rangle$) itself is a maximally entangled state between the $n$-qubit half system and $n$-qubit ancilla system in US (LS), we have:
\begin{eqnarray}\label{babala}
V|\psi\rangle=\sum_i s_i X_i\otimes I_n |\psi_u\rangle\otimes Y_i\otimes I_n |\psi_l\rangle\nonumber
.\end{eqnarray}
Since we have:
\begin{eqnarray}
\langle \psi_u |(X_j^\dag\otimes I_n) (X_i\otimes I_n) |\psi_u\rangle&&=\frac{1}{2^n}\Tr((X_j^\dag\otimes I_n) (X_i\otimes I_n)\sum_{kl}|k\rangle\langle l|\otimes |k\rangle\langle l|)\nonumber\\
&&=\frac{1}{2^n}\Tr(X_j^\dag X_i I_n)=2^{-n}\delta_{ji}\nonumber
,\end{eqnarray}
a similar relation also holds for $\{Y_i\otimes I_n |\psi_l\rangle\}$.
Thus, Eq. \eqref{babala} is exactly the Schmidt decomposition of $V|\psi\rangle$ and we get:
\begin{eqnarray}
|H_{1/2}(V|\psi\rangle)-H_{1/2}(|\psi\rangle)|=2\log_2(2^{-n}\sum_i  s_i)=H_{1/2,s}(V)\nonumber.
\end{eqnarray}
which finishes the proof.
\end{proof}

Now, we show how to use CBE to build quantum circuits. Given a quantum circuit composed of single-qubit and 2-qubit gates, we can divide these gates into two types. The first type contains gates acting only on US or LS. For such a gate, we can build it in the vectorization picture by trivially implementing a UCBE of the form Eq. \eqref{ucbe} with $p_1=1$ (i.e. not a channel but a unitary) to achieve an efficiency $\eta=1$. 

The second type contains 2-qubit gates that give interactions between US and LS. For such a gate $U_G$, we can first write it in the form of canonical decomposition \cite{tyson2003operator}:
\begin{eqnarray}
U_G=G_{u1}\otimes G_{l1}e^{i(\theta_x X\otimes X+\theta_y Y\otimes Y+\theta_z Z\otimes Z)}G_{u2}\otimes G_{l2}
,\end{eqnarray}
where $G_{u1}$, $G_{l1}$, $G_{u2}$, and $G_{l2}$ are single-qubit unitaries. Since both $G_{u1}\otimes G_{l1}$ and $G_{u2}\otimes G_{l2}$ belong to the first type, their CBE realizations are trivial, thus, we only need to consider the CBE construction of $G=e^{i(\theta_x X\otimes X+\theta_y Y\otimes Y+\theta_z Z\otimes Z)}$ which has the (un-normalized) operator Schmidt decomposition:
\begin{eqnarray}
G=s_i e^{i\phi_i}I\otimes I+s_x e^{i\phi_x}X\otimes X+s_y e^{i\phi_y}Y\otimes Y+s_z e^{i\phi_z}Z\otimes Z.
\end{eqnarray}
It is easy to see that a channel $\mathcal{C}_G$ with the form of UCBE Eq. \eqref{ucbe} can do the CBE of $G$:
\begin{eqnarray}\label{2cbe}
\mathcal{C}_G[\rho]&&=\frac{s_i}{s_i+s_x+s_y+s_z}\begin{pmatrix}
e^{i\phi_i}I & 0\\
0& I
\end{pmatrix}\rho \begin{pmatrix}
e^{-i\phi_i}I & 0\\
0& I
\end{pmatrix}\nonumber\\&&+\frac{s_x}{s_i+s_x+s_y+s_z}\begin{pmatrix}
e^{i\phi_x}X & 0\\
0& X
\end{pmatrix}\rho \begin{pmatrix}
e^{-i\phi_x}X & 0\\
0& X
\end{pmatrix}\nonumber\\&&+\frac{s_y}{s_i+s_x+s_y+s_z}\begin{pmatrix}
e^{i\phi_y}Y & 0\\
0& -Y
\end{pmatrix}\rho \begin{pmatrix}
e^{-i\phi_y}Y & 0\\
0& -Y
\end{pmatrix}\nonumber\\&&+\frac{s_z}{s_i+s_x+s_y+s_z}\begin{pmatrix}
e^{i\phi_z}Z & 0\\
0& Z
\end{pmatrix}\rho \begin{pmatrix}
e^{-i\phi_z}Z & 0\\
0& Z
\end{pmatrix},
\end{eqnarray}
with an efficiency $\eta_G=(s_i+s_x+s_y+s_z)^{-1}$. For $\mathcal{C}_G$, we have the following theorem:
\begin{theorem}[CBE of 2-qubt gates]\label{cbe2}
$\mathcal{C}_G$ is an optimal construction of $G$.
\end{theorem}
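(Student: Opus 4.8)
The plan is to read ``optimal'' as saturating the upper bound of Corollary \ref{coro1}: I must show that the efficiency $\eta_G=(s_i+s_x+s_y+s_z)^{-1}$ delivered by $\mathcal{C}_G$ coincides with the largest efficiency $2^{-H_{1/2}(G)/2}$ that \emph{any} CBE of $G$ can attain. The argument naturally splits in two. First I would confirm that $\mathcal{C}_G$ really is a CBE of $G$ with the advertised efficiency; second I would show that this efficiency equals the operator-Schmidt measure $2^{-H_{1/2,s}(G)/2}$ and then sandwich $\eta_G$ between Corollary \ref{coro1} and Lemma \ref{epl} to force equality.

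For the encoding, I would read off the four Kraus pairs of Eq. \eqref{2cbe} and compute the induced action $\sum_i p_i\,W_i\otimes V_i^*$ on the relevant off-diagonal ($01$) block, with weights $p_\alpha=s_\alpha/(s_i+s_x+s_y+s_z)$. Using $X^*=X$, $Z^*=Z$ and $Y^*=-Y$, the deliberately inserted minus sign on the lower-right $-Y$ block exactly cancels the complex conjugation, so the $y$-term reproduces $e^{i\phi_y}Y\otimes Y$ rather than $e^{i\phi_y}Y\otimes(-Y)$, while the phases $e^{i\phi_\alpha}$ carry through untouched. Summing the four contributions gives $(s_i+s_x+s_y+s_z)^{-1}\bigl(s_i e^{i\phi_i}I\otimes I+s_x e^{i\phi_x}X\otimes X+s_y e^{i\phi_y}Y\otimes Y+s_z e^{i\phi_z}Z\otimes Z\bigr)=\eta_G\,G$, confirming that $\mathcal{C}_G$ is a CBE of $G$ with efficiency $\eta_G=(s_i+s_x+s_y+s_z)^{-1}$.

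Next I would evaluate $H_{1/2,s}(G)$. Since the single-qubit Paulis satisfy $\Tr(P^\dag P)=2$, the normalized operator-Schmidt basis is $\{P/\sqrt2\}$, so the positive Schmidt coefficients of $G$ are $\{2s_i,2s_x,2s_y,2s_z\}$; unitarity gives $\sum_\alpha (2s_\alpha)^2=4=2^{2}$, consistent with $2n=2$ in Definition \ref{ep}. Plugging in, $H_{1/2,s}(G)=2\log_2\!\bigl(2(s_i+s_x+s_y+s_z)\bigr)-2=2\log_2(s_i+s_x+s_y+s_z)$, so that $2^{-H_{1/2,s}(G)/2}=(s_i+s_x+s_y+s_z)^{-1}=\eta_G$. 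I would then chain the two bounds: Corollary \ref{coro1} gives $\eta_G\le 2^{-H_{1/2}(G)/2}$, while Lemma \ref{epl} ($H_{1/2}(G)\ge H_{1/2,s}(G)$) gives $2^{-H_{1/2}(G)/2}\le 2^{-H_{1/2,s}(G)/2}=\eta_G$. Together these pin every inequality to equality, whence $\eta_G=2^{-H_{1/2}(G)/2}$ and $\mathcal{C}_G$ attains the best efficiency any CBE of $G$ can have, i.e.\ it is optimal.

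The main obstacle is conceptual rather than computational: one would like to know that the entangling power $H_{1/2}(G)$, defined as a supremum over all inputs with ancillas and pre-existing entanglement, is captured exactly by the static operator-Schmidt quantity $H_{1/2,s}(G)$. I would avoid a direct optimization over input states by leaning entirely on Lemma \ref{epl} for the single inequality $H_{1/2}(G)\ge H_{1/2,s}(G)$ and letting the sandwich against Corollary \ref{coro1} supply the matching reverse bound, so that the equality $H_{1/2}(G)=H_{1/2,s}(G)$ emerges for free. The only delicate bookkeeping is the factor of $2$ from the Pauli normalization and the $Y^*=-Y$ sign, both of which must be tracked precisely to land on $\eta_G=(s_i+s_x+s_y+s_z)^{-1}$ exactly.
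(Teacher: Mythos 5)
Your proof is correct and follows essentially the same route as the paper's: identify $\eta_G$ with $2^{-H_{1/2,s}(G)/2}$ and then sandwich it between Corollary \ref{coro1} and Lemma \ref{epl} to force equality with the upper bound $2^{-H_{1/2}(G)/2}$. The extra steps you include (verifying the Kraus action on the $01$ block with the $Y^*=-Y$ sign, and the factor-of-$2$ Pauli normalization giving Schmidt coefficients $\{2s_\alpha\}$) are details the paper leaves implicit, and your bookkeeping of them is accurate.
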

\begin{proof}
To show the optimality, we need to show $\eta_G$ reaches the upper bound in Corollary \ref{coro1}. First, we have:
$$\eta_G=2^{-\log_2(s_i+s_x+s_y+s_z)}=2^{\frac{-H_{1/2,s}(G)}{2}}.$$
Since $H_{1/2,s}(G)\leq H_{1/2}(G)$, thus, we have:
$$\eta_G\geq 2^{\frac{-H_{1/2}(G)}{2}}.$$
However, due to the Corollary \ref{coro1}, we also have:
$$\eta_G\leq 2^{\frac{-H_{1/2}(G)}{2}}.$$
Thus, we must have:
$$\eta_G= 2^{\frac{-H_{1/2}(G)}{2}}.$$
\end{proof}
This proof also indicates the following corollary:
\begin{corollary}[Equivalence of measures of entangling power for two-qubit gates]
When $V$ is a two-qubit gate, we have:
$$H_{1/2}(V)= H_{1/2,p}(V)= H_{1/2,s}(V).$$
\end{corollary}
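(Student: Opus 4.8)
The plan is to close the chain of inequalities supplied by Lemma \ref{epl}. Since that lemma already gives $H_{1/2}(V)\geq H_{1/2,p}(V)\geq H_{1/2,s}(V)$ for every $V$, it suffices to prove the single reverse inequality $H_{1/2}(V)\leq H_{1/2,s}(V)$, which pins all three quantities to a common value. I would first establish this for the canonical generator $G=e^{i(\theta_x X\otimes X+\theta_y Y\otimes Y+\theta_z Z\otimes Z)}$, where the work is essentially already done in the proof of Theorem \ref{cbe2}: there the channel $\mathcal{C}_G$ is shown by direct computation to have efficiency $\eta_G=2^{-H_{1/2,s}(G)/2}$, while Corollary \ref{coro1} forces $\eta_G\leq 2^{-H_{1/2}(G)/2}$; combined with the generic bound $H_{1/2,s}(G)\leq H_{1/2}(G)$ this yields $H_{1/2}(G)=H_{1/2,s}(G)$, and hence $H_{1/2}(G)=H_{1/2,p}(G)=H_{1/2,s}(G)$.

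The remaining step is to promote this from the canonical generator to an arbitrary two-qubit gate. Using the canonical decomposition $V=(G_{u1}\otimes G_{l1})\,G\,(G_{u2}\otimes G_{l2})$ with single-qubit unitaries, I would show that each of the three measures takes the same value on $V$ as on $G$. For $H_{1/2,s}$ this is immediate: from an operator Schmidt decomposition $G=\sum_i s_i X_i\otimes Y_i$ one gets $V=\sum_i s_i (G_{u1}X_iG_{u2})\otimes(G_{l1}Y_iG_{l2})$, and since the trace-orthonormality $\Tr(X_i^\dag X_j)=\delta_{ij}$ is preserved under left/right multiplication by unitaries, $V$ has the same Schmidt spectrum $\{s_i\}$ as $G$, whence $H_{1/2,s}(V)=H_{1/2,s}(G)$. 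For $H_{1/2}$ and $H_{1/2,p}$, I would use that a unitary which is a product across the US--LS cut preserves the Schmidt coefficients of any state under that bipartition, hence its $1/2$-Rényi entropy, and maps product states to product states bijectively; setting $|\phi\rangle=(G_{u2}\otimes G_{l2})|\psi\rangle$ then gives $|H_{1/2}(V|\psi\rangle)-H_{1/2}(|\psi\rangle)|=|H_{1/2}(G|\phi\rangle)-H_{1/2}(|\phi\rangle)|$, so the defining suprema of $V$ and $G$ coincide over the full state set (respectively over product states).

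Combining the two steps, all three measures agree on $G$ and are individually invariant under the local dressing that relates $G$ to $V$, so they agree on $V$ as well. I expect the only delicate point to be the invariance argument for $H_{1/2}$: because its definition allows ancilla qubits and pre-existing entanglement, I must verify that the dressing unitaries act only on the gate qubits and remain products across the US--LS cut even in the presence of ancillas, so that they genuinely leave the cross-cut entanglement entropy invariant. The substantive content lies entirely in the optimality computation of Theorem \ref{cbe2}; the corollary is then a clean consequence of that optimality together with local-unitary invariance.
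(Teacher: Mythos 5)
Your proposal is correct and follows essentially the same route as the paper: for the canonical form $G$, the equality is forced by combining the efficiency computation $\eta_G = 2^{-H_{1/2,s}(G)/2}$ from the proof of Theorem \ref{cbe2} with the upper bound of Corollary \ref{coro1} and the inequality chain of Lemma \ref{epl}. The only difference is that you explicitly carry out the local-unitary-invariance argument (including the ancilla subtlety in the definition of $H_{1/2}$) needed to pass from $G$ to an arbitrary two-qubit gate $V$, a step the paper leaves implicit when it states that the proof of Theorem \ref{cbe2} indicates the corollary.
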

It would be interesting to see if this corollary works for general $2n$-qubit systems.

Therefore, for any 2-qubit gate belongs to the second type, we have shown how to do its optimal CBE. For example, the CNOT gate (with Schmidt number 2) is locally equivalent with:
\begin{eqnarray}
G_{cnot}=\frac{1}{\sqrt{2}}I\otimes I+\frac{i}{\sqrt{2}}X\otimes X.
\end{eqnarray}
Thus, its CBE efficiency is $\eta_c=1/\sqrt{2}$. And the SWAP gate (with Schmidt number 4) is locally equivalent with:
\begin{eqnarray}
G_{swap}&&=(2^{-3/2}+i2^{-3/2})I\otimes I+(2^{-3/2}+i2^{-3/2})X\otimes X\nonumber\\&&+(2^{-3/2}+i2^{-3/2})Y\otimes Y+(2^{-3/2}+i2^{-3/2})Z\otimes Z.
\end{eqnarray}
Thus, its CBE efficiency is $\eta_s=1/2$. Now, for general quantum circuits, we can repeatedly apply channels for CBE of gates to obtain the Theorem \ref{cbec}. Here, we want to mention that for multi-qubit gates, it is non-trivial to find optimal constructions since there is no known generalization of the canonical decomposition for large systems.

\subsection{CBE for Hamiltonian simulation (Corollary \ref{cbeh})}\label{ape7}
We now consider the CBE constructions of Hamiltonian simulations. Here, for simplicity, we consider Hamiltonians with known forms under Pauli basis: $H=\sum_i h_i P_i$ with $\{P_i\}$ $2n$-qubit Pauli operators and we only consider the first-order product formula (Trotter-Suzuki formula) formalism \cite{lloyd1996universal}. To simulate the unitary $e^{-iH t}$, the first-order product formula has the form:
\begin{eqnarray}\label{pf}
e^{-iHt}\approx U_{pf}(t,r)=(\Pi_i e^{-i h_i P_it/r})^r
.\end{eqnarray}
To use $U_{pf}$ to simulate $e^{-iHt}$ to an accuracy $\varepsilon$, the gate complexity is of order $\mathcal{O}(t^2\varepsilon^{-1})$. 

Similar to the circuit case, to do the CBE for $U_{pf}(t,r)$, we can divide Pauli terms in $H$ in two types. The first type contains Pauli terms with identity $I_n$ either on US or LS. For these terms, we can trivially implement a UCBE of the form Eq. \eqref{ucbe} with $p_1=1$ (i.e. not a channel but a unitary) to achieve an efficiency $\eta=1$ for $e^{-i h_i P_i t/r}$. 

The second type contains Pauli terms that have $\{X,Y,Z\}$ on both US and LS. For these terms, $e^{-i h_i P_i t/r}$ can build interactions between US and LS and thus we need a channel to do the CBE. To do so, we can observe that each $P_i$ (we will take $P_0$ as an example) can be transformed into $X\otimes I_{n-1}\otimes X\otimes I_{n-1}$ by local Clifford circuits:
\begin{eqnarray}
P_0=U_{cu}\otimes U_{cl}(X\otimes I_{n-1}\otimes X\otimes I_{n-1}) U_{cu}^\dag\otimes U_{cl}^\dag
,\end{eqnarray}
with $U_{cu}$ and $U_{cl}$ $n$-qubit Clifford circuits acting on US and LS respectively. Thus, it is sufficient to consider the CBE of $e^{-ih_0 X\otimes X t/r}$ which have been introduced in Eq. \eqref{2cbe}. Thus, for $e^{-i h_0 P_0 t/r}$, Eq. \eqref{2cbe} gives its optimal CBE construction with an efficiency $\eta_{P0}=1/(\cos(|h_0|t/r)+\sin(|h_0|t/r))$. With this optimal construction, we have the Theorem \ref{cbeh} for CBE of $U_{pf}(t,r)$. The proof is shown below:
\begin{proof}[Proof of Theorem \ref{cbeh}]
We can define the efficiency of CBE of $U_{pf}(t,r)$ as $\eta_H$, then we have:
\begin{eqnarray}
\eta_H=\Pi_{i\in \text{Type 2}}(\cos(|h_i|t/r)+\sin(|h_i|t/r))^{-r}\nonumber.
\end{eqnarray}
Assuming $r$ is relatively large such that $\cos(|h_i|t/r)+\sin(|h_i|t/r)\approx 1+|h_i|t/r$, thus we have:
\begin{eqnarray}
\eta_H&&\approx \Pi_{i\in \text{Type 2}}(1+|h_i|t/r)^{-r}\approx (1+\sum_{i\in \text{Type 2}}|h_i|t/r)^{-r}\nonumber\\&& =(1+\|H\|_{\text{Inter}}t/r)^{-r}=\left((1+\|H\|_{\text{Inter}}t/r)^{-r/(\|H\|_{\text{Inter}}t)}\right)^{\|H\|_{\text{Inter}}t}\approx e^{-\|H\|_{\text{Inter}}t}.
\end{eqnarray}
\end{proof}

In Theorem \ref{cbec}, we use the number of CNOT gates as a benchmark, as it is a standard gate in many universal gate sets. In Theorem \ref{cbeh}, we consider only the first-order Trotter formula \cite{lloyd1996universal} for simplicity. Nevertheless, the result in Theorem \ref{cbeh} should be nearly tight and cannot be optimized by high-order Trotter \cite{berry2007efficient} and other simulation methods such as LCU \cite{berry2015simulating} and QSP \cite{low2017optimal} as it matches the direct implementation of $e^{-iHt}$ when we assume a sufficiently large Trotter steps for the proof. We emphasize that in practice, one can use various strategies to optimize the encoding efficiency in Theorem \ref{cbec} and \ref{cbeh}. For example, a SWAP gate has its optimal CBE encoding efficiency $1/2$, but if one trivially uses 3 CBE channels of CNOT to build the SWAP gate, the efficiency will be only $2^{-3/2}$.

\section{Workflow of the algorithm}

\begin{figure}[htbp]
\centering
\includegraphics[width=0.5\textwidth]{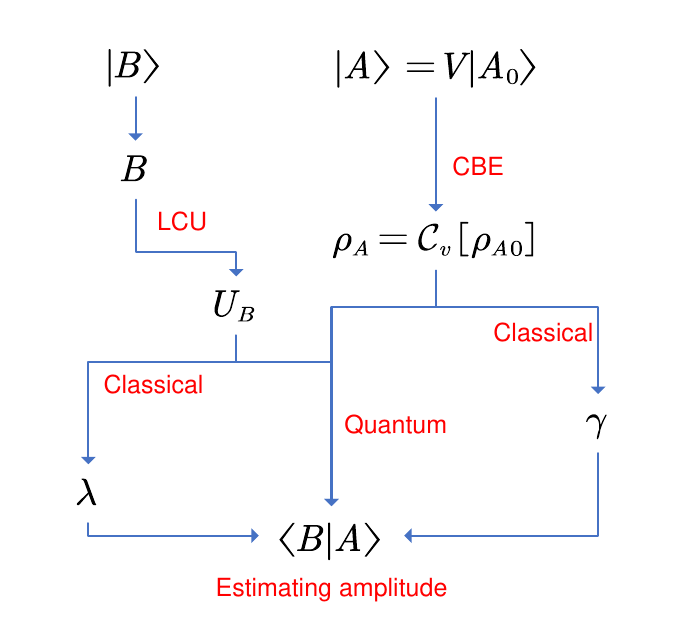}
\caption{Workflow of the amplitude estimation algorithm. To estimate the amplitude $\langle B|A\rangle$, we need to construct $U_B$ and prepare $\rho_A$. For $U_B$, we first turn $|B\rangle$ of the form Eq. \eqref{bform} into its matrix form and then use LCU to do the block encoding of $B$ and obtain $U_B$. Since the form of $|B\rangle$ is known priorly, the value of $\lambda=2^{n/2}\||B\rangle\|_1^{-1}$ can be directly calculated from the LCU construction. For $\rho_A$, according to $|A\rangle=V|A_0\rangle$, we first prepare $\rho_{A_0}$ and construct a channel $\mathcal{C}_v[\cdot]$ as the CBE of $V$, then we can obtain $\rho_A=\mathcal{C}_v[\rho_{A_0}]$. Since we can easily calculate the CBE efficiency of $V$ from its construction, the value of $\gamma$ can also be classically calculated. Now, we can use these elements with the Hadamard test to give a s.q.l. estimation of the amplitude $\langle B|A\rangle$ or with the amplitude estimation to give a h.l. estimation of the amplitude $\langle B|A\rangle$ based on Fig. \ref{f2}. Note that for h.l. estimation, we need to replace $\rho_A$ with its purification state $|S_A\rangle$.\label{f3}}
\end{figure}

Now, we can summarize the whole workflow of our algorithm in Fig.~\ref{f3}. Note that based on the above discussions, the exact values of $\lambda$ and $\gamma$ can be directly concluded from the construction of $U_B$ and the preparation of $\rho_A$. 

\section{Improvements on estimating amplitudes\label{improve}}
The interesting point of our method is that we compress $2n$-qubit information in $n$-qubit systems, but this can surprisingly give benefits: we can estimate the amplitude $\langle B|A\rangle$ with complexity reductions $\gamma^{-2}\lambda^{-2}$ and $\gamma^{-1}\lambda^{-1}$ in Theorem \ref{t1a} compared with direct estimations by Hadamard tests and amplitude estimations. The improvement is more significant with a larger $\gamma\lambda$. 
In an extreme case where $|A\rangle$ is a product state between US and LS and $|B\rangle$ is a maximally entangled state between US and LS, we are allowed to have the largest value $\gamma\lambda=2^{n/2-1}$ with $\gamma=1/2$ and $\lambda=2^{n/2}$, which is an exponential improvement over traditional methods. In this case, we summarize the query complexity costs to estimate the amplitude with a relative error $\epsilon=\mathcal{O}(1)$ in Table \ref{tbl1}. 
\begin{table}[htbp]
\centering  
\begin{tabular}{|c|c|c|c|c|}  
\hline  
& & & & \\
Amplitude & Direct estimation (s.q.l.)&Direct estimation (h.l.)&Our method (s.q.l.)&Our method (h.l.) \\  
& & & & \\
\hline
& & & &\\ 
$\mathcal{O}(2^{-n/2})$&$\mathcal{O}(2^{n})$&$\mathcal{O}(2^{n/2})$&$\mathcal{O}(1)$&$\mathcal{O}(1)$ \\
& & & & \\
\hline
& & & &\\ 
$\mathcal{O}(2^{-n})$&$\mathcal{O}(2^{2n})$&$\mathcal{O}(2^{n})$&$\mathcal{O}( 2^{n})$&$\mathcal{O}( 2^{n/2})$ \\
& & & & \\
\hline
\end{tabular}
\caption{Comparisons of query complexities for an $\epsilon=\mathcal{O}(1)$-relative estimation of $\langle B|A\rangle$. We consider the extreme cases where $|A\rangle$ is a product state between US and LS and has been encoded into $\rho_A$ achieving the upper bound in Theorem \ref{ubga}, and $|B\rangle$ is a maximally entangled state between US and LS and has been encoded into $U_B$ achieving the upper bound in Theorem \ref{ubl}. We consider two scenarios: $\text{Re(Im)}[\langle B|A\rangle]$ is of order $\mathcal{O}(2^{-n/2})$ and is of order $\mathcal{O}(2^{-n})$, which are related with discussions in the implication section.\label{tbl1}}  
\end{table}

Thus, we can see that when $\text{Re(Im)}[\langle B|A\rangle]$ is of order $\mathcal{O}(2^{-n/2})$, we can achieve an exponential speedup over traditional methods. Since we are considering $2n$-qubit systems, this corresponds to situations where $|A\rangle$ is relatively concentrated on $|B\rangle$. When $\text{Re(Im)}[\langle B|A\rangle]$ is of order $\mathcal{O}(2^{-n})$ which is the cases where $|A\rangle$ is anti-concentrated on $|B\rangle$, we can achieve a complexity using only a standard quantum limit estimation (Eq.~\eqref{g1}-\eqref{g2}) comparable with the direct estimation (h.l.), which also means we can achieve an equivalent quadratic speedup over the direct estimation (s.q.l.) without changing the dependence on $\epsilon$. Moreover, in this example, by using the Heisenberg limit estimation (Eq.~\eqref{g3}-\eqref{g4}), we can even achieve an equivalent quadratic speedup over the direct estimation (h.l.) and an equivalent quartic speedup over the direct estimation (s.q.l.). Note that we are not actually achieving for example the truly quartic speedup with respect to $\epsilon$ i.e. $\epsilon^{-1/2}$, but only having the value $\gamma\lambda$ reduce the complexity and enable a quartic speedup in total performance.

In practical scenarios, both $|A\rangle$ and $|B\rangle$ can deviate from the above extreme case, making the improvement $\gamma\lambda$ milder. 
Also, when we consider the general setting where we use CBE to prepare $\rho_A$ and use LCU to prepare $U_B$ with $|B\rangle$ obeying the form of Eq.~\eqref{bform}, the value of $\gamma\lambda$ can be further reduced. Nevertheless, whenever $\||B\rangle\|_1$ is within $2^{\mathit{o}(n)}$ and the circuit depth (number of CNOT gates connecting US and LS) or the Hamiltonian simulation time ($\|H_{\text{Inter}}\|t$) for preparing $|A\rangle$ satisfies $n-K=\Theta(n)$ ($n/2-\|H_{inter}\|t=\Theta(n)$), the value of $\gamma\lambda$ is still exponentially large. Since the gate complexity between preparing $|B\rangle$ and constructing $U_B$ are equal \cite{zhang2022quantum}, which is also true for the gate complexity between preparing $|A\rangle$ and $\rho_A$ as shown in the construction of the CBE of individual gates, we can therefore conclude the regime where our algorithm has practical exponential improvements.

\section{Detailed discussion on implications and applications}

\subsection{$|A\rangle$ as a product state\label{ooru}}
When $|A\rangle=|A_u\rangle|A_l\rangle$ is a product state between US and LS, we can achieve the largest value of $\gamma=1/2$. For such $|A\rangle$, while there is no entanglement between US and LS, $|A\rangle$ can still be classically intractable 
because of the entanglement within US and LS. Also, since we generally require $|B\rangle$ to have a large entanglement between US and LS, estimating $\langle B|A\rangle$ is intrinsically a $2n$-qubit problem despite $|A\rangle$ as a product state and thus can not be efficiently turned down to $n$-qubit systems by circuit cutting methods \cite{harrow2024optimal,peng2020simulating}. 

$\textbf{BQP-completeness}$: If we have $|A\rangle=|A_u\rangle|A_u^*\rangle$ with $|A_u\rangle$ an arbitrary $n$-qubit state and set $|B\rangle=|B_b\rangle$ a Bell basis state, then we have $\langle B|A\rangle=2^{-n/2}\langle A_u|P|A_u\rangle$ with $P$ a $n$-qubit Pauli operator (Here, $A=|A_u\rangle\langle A_u|$). It is known that estimating the Pauli expectation value up to a (polynomial) small additive error is BQP-complete \cite{aharonov2017interactive,janzing2005ergodic} and the essence of our method is to use $\langle A_u|P|A_u\rangle$ to estimate $\langle B|A\rangle$ to a small relative error. When $\langle B|A\rangle$ is of order $\mathcal{O}(2^{-n/2})$, this relative error directly corresponds to the small additive error of estimating Pauli expectation values, and thus the whole setup is also BQP-complete. This also indicates that the efficient estimation of $\mathcal{O}(2^{-n/2})$ amplitudes in Table \ref{tbl1} is the best we can expect and it is beyond the capability of quantum computers for smaller amplitudes such as those of order $\mathcal{O}(2^{-n})$.

$\textbf{Bell sampling}$: Recently, several protocols have been developed, ranging from quantum learning \cite{huang2021information,huang2022quantum} to quantum sampling advantage \cite{hangleiter2024bell}. These protocols are based on a setup where $|A\rangle = |A_u\rangle |A_u\rangle$, with $|A_u\rangle$ representing an arbitrary $n$-qubit state, and $|B\rangle = |B_b\rangle$, is a Bell basis state. In Ref. \cite{hangleiter2024bell}, the authors propose the Bell sampling protocol and show that any GapP function can be encoded into the amplitude $\langle B|A\rangle$. Estimating $\langle B|A\rangle$ to a small relative error corresponds to estimating the values of GapP functions to a small relative error, which is GapP-hard \cite{hangleiter2023computational}. Therefore, our method is unable to estimate such amplitudes efficiently. Otherwise, quantum computers could solve problems across the entire polynomial hierarchy \cite{toda1991pp}. This also indicates that such amplitudes should be exponentially smaller than $\mathcal{O}(2^{-n/2})$ such that the resulting complexity of our method is still exponential (as shown in Table \ref{tbl1}), which coincides with quantum sampling advantage protocols where the anti-concentration effect \cite{dalzell2022random,hangleiter2023computational} makes almost all amplitudes around $\mathcal{O}(2^{-n})$.

\subsection{$|A\rangle$ as a entangled state \label{bbbbbbbb}}
When $|A\rangle$ is prepared by a quantum circuit or a Hamiltonian simulation of depth (time) with $n-K=\Theta(n)$ ($n/2-\|H_{inter}\|t=\Theta(n)$), $\gamma\lambda$ is still exponential when $B$ is proportional to a unitary operator. This regime is interesting because we can see this setup as a genuine $2n$-qubit setup. 

When $|A\rangle$ is a product state (Subsection \ref{ooru}), while the large entanglement in $|B\rangle$ ensures the amplitude is a $2n$-qubit amplitude, we can see from above that we can use matrixization to convert it into a $n$-qubit value. Thus, the exponential improvement of our method can be seen as finding a way or a picture to turn these amplitudes into what they really are in terms of hardness. In other words, they are essentially $n$-qubit values but hidden in $2n$-qubit amplitudes.

However, this $2n$ to $n$ conversion doesn't hold for entangled states. This can be understood from four facts. 
First, from the tensor network picture, the circuit or the Hamiltonian simulation allows a linear depth or a linear evolution time that has an exponential large bond dimension which is beyond the efficiently separable regime \cite{cirac2021matrix,peng2020simulating}. Second, due to the interactions between US and LS, we can no longer use matrixization to convert it into a straightforward $n$-qubit value as in the above two cases. Third, when $|B\rangle=U_{clif}|0\rangle$ is a stabilizer state, the amplitudes can be understood as the amplitudes of states prepared by Clifford circuits with non-stabilizer inputs ($|A\rangle$) projecting on the computational basis \cite{yoganathan2019quantum,gottesman1998heisenberg}. Fourth, such depth/simulation time is already enough for the emergence of various interesting things such as the approximate unitary design \cite{schuster2024random}, pseudorandom unitaries \cite{schuster2024random}, and anti-concentration \cite{dalzell2022random}.

Thus, this not maximally entangled case is a genuine $2n$-qubit setup, and our method is able to truly give a exponential improvement over previous methods. This doesn't violate the no-go theorem set by the linearity of quantum mechanics \cite{bennett1997strengths,childs2016optimal,abrams1998nonlinear} since to build CBE for $V$, we need additional information about how $V$ is constructed rather than a black box. Even though, to the best of our knowledge, there are no known results utilizing this additional information to go beyond quadratic speedup ($\mathcal{O}(|\mu|^{-1})$ complexity). Our method can achieve this because the logic of our method is not to estimate $\langle B|A\rangle$ directly but to first convert $|A\rangle$ and $|B\rangle$ into other objects by DMSE and UBSE and then do the estimation. 

\subsection{Hardness of $|B\rangle$ \label{ape9}}
We have shown that if $B$ is proportional to a unitary operator $U_B$, then this unitary operator is exactly a UBSE of $B$ with the largest $\lambda=2^{n/2}$. If we further set $|A\rangle=|A_u\rangle|A_l\rangle$, then we have $\langle B|A\rangle=\langle A_l^*|U_B^\dag |A_u\rangle$. Since $U_B$ can be an arbitrary unitary operator, we can even set $|A\rangle=|0\rangle|0\rangle$ and still encode any GapP functions into the amplitude $\langle B|A\rangle$ \cite{hangleiter2023computational}. Thus, the argument is similar to the above discussions on Bell sampling.

We can also consider the average behaviors of $B$ and $|B\rangle$. If $|B\rangle$ is picked from a state 2-design \cite{mele2024introduction},
with a high probability, $|B\rangle$ is close to a maximally entangled state that meets our algorithm,
since the average purity of the reduced density matrix of $|B\rangle$ is $2^{n+1}/(2^{2n}+1)\approx 2^{-n}$. 
Since when $B$ is proportional to a unitary operator, $|B\rangle$ is exactly a maximally entangled state, thus, a natural guess is that the randomness of $B$ can be inherited by $|B\rangle$. Indeed, we prove that when the $n$-qubit operator $U_B$ is drawn from a unitary 2-design, the ensemble $\{|B\rangle=2^{-n/2}\ket{\ket{U_B}}\}$ forms an $\mathcal{O}(2^{-2n})$-approximate $2n$-qubit state 2-design in terms of the trace distance. 
\begin{theorem}
When the $n$-qubit operator $U_B$ is drawn from a unitary 2-design, the ensemble $\{|B\rangle=2^{-n/2}\ket{\ket{U_B}}\}$ forms an $\mathcal{O}(2^{-2n})$-approximate $2n$-qubit state 2-design in terms of the trace distance
\end{theorem}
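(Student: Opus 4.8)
The plan is to prove the claim by computing the first two moment operators of the ensemble $\{|B\rangle\}$ and comparing them against the Haar moments, using only the defining property of a unitary $2$-design together with the commutant/Weingarten structure of the two-fold twirl. Throughout write $d:=2^n$ and $D:=2^{2n}=d^2$, so each $|B\rangle$ is a pure state on the $D$-dimensional $2n$-qubit space. The first step is to isolate the randomness: by the vectorization identity $\ket{\ket{U_B}}=(U_B\otimes I_n)\ket{\ket{I_n}}$, every ensemble element is $|B\rangle=(U_B\otimes I_n)|\Phi\rangle$ with $|\Phi\rangle=2^{-n/2}\ket{\ket{I_n}}$ the normalized maximally entangled state between US and LS. Twirling $|\Phi\rangle\langle\Phi|$ over $U_B$ (a $1$-design suffices) gives $\mathbb{E}_{U_B}[|B\rangle\langle B|]=D^{-1}I$ \emph{exactly}, so the ensemble reproduces the Haar first moment with zero error and the entire deviation is carried by the second moment.

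For the second moment I would write $(|B\rangle\langle B|)^{\otimes2}=(U_B\otimes U_B)_{US}\,P\,(U_B^\dagger\otimes U_B^\dagger)_{US}$ with $P=(|\Phi\rangle\langle\Phi|)^{\otimes2}$, the unitaries acting only on the two US copies. Since $U_B$ comes from a unitary $2$-design, the ensemble average equals the Haar average of this two-fold twirl on the US factor, and the commutant of $\{U\otimes U\}$ is spanned by $\{I_{US},\mathrm{SWAP}_{US}\}$. Hence the moment operator has the form $M^{(2)}=I_{US}\otimes C_1+\mathrm{SWAP}_{US}\otimes C_2$, and I would fix $C_1,C_2$ from the two partial-trace moments, which are exactly computable from the maximally entangled structure,
\[
\Tr_{US}[P]=D^{-1}I_{LS},\qquad \Tr_{US}[\mathrm{SWAP}_{US}\,P]=D^{-1}\,\mathrm{SWAP}_{LS}.
\]
Solving the resulting $2\times2$ system (equivalently, invoking the $S_2$ Weingarten matrix in dimension $d$) yields the closed form
\[
M^{(2)}=\frac{d(\mathbb{1}+\mathbb{S})-(\mathbb{S}_{US}+\mathbb{S}_{LS})}{d^{3}(d^{2}-1)},
\]
where $\mathbb{1}$ is the global identity, $\mathbb{S}=\mathbb{S}_{US}\mathbb{S}_{LS}$ the global copy-swap, and $\mathbb{S}_{US},\mathbb{S}_{LS}$ the two partial swaps; a quick trace check gives $\Tr M^{(2)}=1$.

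Finally I would compare $M^{(2)}$ with the Haar state-$2$-design moment $\rho^{(2)}_{\mathrm{Haar}}=\Pi_{\mathrm{sym}}/\binom{D+1}{2}=(\mathbb{1}+\mathbb{S})/(D(D+1))$ and bound the trace norm of $\Delta:=M^{(2)}-\rho^{(2)}_{\mathrm{Haar}}$. The key structural fact is that $\mathbb{1},\mathbb{S},\mathbb{S}_{US},\mathbb{S}_{LS}$ mutually commute, so they are simultaneously diagonalized by the four sectors labeled by the $\pm1$ eigenvalues $(\epsilon_U,\epsilon_L)$ of $(\mathbb{S}_{US},\mathbb{S}_{LS})$, with explicitly known multiplicities $\tfrac{d^2}{4}(d+\epsilon_U)(d+\epsilon_L)$. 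Both $M^{(2)}$ and $\rho^{(2)}_{\mathrm{Haar}}$ vanish on the globally antisymmetric sectors ($\epsilon_U\epsilon_L=-1$), since $(|B\rangle\langle B|)^{\otimes2}$ is permutation-symmetric across copies; thus $\Delta$ is supported only on the two symmetric sectors $(+,+)$ and $(-,-)$. On each, the eigenvalue of $\Delta$ is read off directly from the closed form, and $\|\Delta\|_1$ is the sum of $|\text{eigenvalue}|\times\text{multiplicity}$ over these two sectors, giving an exponentially small value whose precise order is dictated by the multiplicities.

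I expect the main obstacle to be exactly this last trace-norm estimate. The partial-swap operators $\mathbb{S}_{US},\mathbb{S}_{LS}$ are full-rank involutions with trace norm $\|\mathbb{S}_{US}\|_1=\|\mathbb{S}_{LS}\|_1=D^2$, so a naive triangle-inequality bound on $\|\Delta\|_1$ is hopelessly loose and would suggest a \emph{growing} rather than shrinking distance. The entire content of the estimate is the delicate cancellation between these partial-swap terms and the $\mathbb{1}+\mathbb{S}$ term, which only becomes visible after the simultaneous diagonalization — the per-sector eigenvalues of $\Delta$ are tiny while the sector multiplicities are large, and it is their product that must be controlled. Everything else (the vectorization rewrite, the two partial-trace identities, and the $2$-design replacement) is routine; pinning down the exact suppression order out of that cancellation is the crux.
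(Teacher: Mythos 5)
Your setup is sound and, up to the final step, coincides with the paper's own proof: the paper also writes $|B\rangle=(U_B\otimes I)|\Phi\rangle$, invokes the two-fold twirl over the 2-design on the US copies, and arrives at exactly your closed form $M^{(2)}=\frac{1}{d^2(d^2-1)}(\mathbb{1}+\mathbb{S})-\frac{1}{d^3(d^2-1)}(\mathbb{S}_{US}+\mathbb{S}_{LS})$ with $d=2^n$. Where you diverge is the last step: the paper bounds $\|\Delta\|_1$ by a term-by-term triangle inequality, while you propose exact simultaneous diagonalization over the four $(\epsilon_U,\epsilon_L)$ swap sectors. Your route is the rigorous one, and your observation that $\Delta$ vanishes on the mixed sectors is correct.

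The gap is that you stop exactly at the crux and predict the wrong outcome. Carry your own computation through: with $a=\frac{2}{d^2(d^4-1)}$ and $b=\frac{1}{d^3(d^2-1)}$ one has $b/a=\frac{d^2+1}{2d}$, so $b$ \emph{dominates} $a$ rather than cancelling against it. The eigenvalues of $\Delta$ on the $(+,+)$ and $(-,-)$ sectors are $2a-2b=-\frac{2(d-1)}{d^3(d+1)(d^2+1)}$ and $2a+2b=\frac{2(d+1)}{d^3(d-1)(d^2+1)}$, with multiplicities $\frac{d^2(d\pm1)^2}{4}$, giving
\begin{equation}
\|\Delta\|_1=\frac{d^2-1}{2d(d^2+1)}+\frac{d^2-1}{2d(d^2+1)}=\frac{d^2-1}{d(d^2+1)}=\Theta(2^{-n}),
\end{equation}
not $\mathcal{O}(2^{-2n})$. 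There is no delicate cancellation to find: the partial-swap terms set the scale, and the naive triangle inequality (which gives $\mathcal{O}(2^{-n})$, shrinking, not ``growing'' as you assert --- $b\cdot\|\mathbb{S}_{US}\otimes I\|_1\approx 2^{-5n}\cdot 2^{4n}$) is in fact essentially tight. Your framework therefore disproves the stated theorem rather than proving it; the correct rate is $\Theta(2^{-n})$. The discrepancy traces to an arithmetic slip in the paper's own proof: the operators $\mathbb{S}_{US}\otimes I_{LS}$ and $I_{US}\otimes\mathbb{S}_{LS}$ are unitaries on the full $2^{4n}$-dimensional space, so their trace norms are $2^{4n}$, and the corresponding terms contribute $\frac{2^{-3n}}{2^{2n}-1}\cdot 2^{4n}=\frac{2^{n}}{2^{2n}-1}\approx 2^{-n}$, not the $\frac{1}{2^{2n}-1}$ the paper writes. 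A model-independent sanity check confirms $2^{-n}$ is unavoidable: every ensemble state is exactly maximally entangled across US/LS (purity $1/d$), whereas Haar states have average purity $\frac{2d}{d^2+1}\approx 2/d$; testing $\Delta$ against the witness $\mathbb{S}_{US}\otimes I_{LS}$ (operator norm $1$) and applying H\"older gives $\|\Delta\|_1\geq\frac{d^2-1}{d(d^2+1)}$, matching the exact value above.
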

\begin{proof}
First, we have $|B\rangle=(U_B\otimes I_n)|B_I\rangle$ with $|B_I\rangle=2^{-n/2}\sum_i |i\rangle|i\rangle$. Thus, following the Weingarten calculus \cite{mele2024introduction}, we obtain:
\begin{eqnarray}
&&\int_{U_B\sim \nu} (U_B\otimes I_2)|B_I\rangle\langle B_I| (U_B^\dag\otimes I_4))^{\otimes 2}\nonumber\\&&=\frac{Tr_{13}(|B_I\rangle\langle B_I|\otimes |B_I\rangle\langle B_I|)-2^{-n}Tr_{13}((\text{SWAP}_{13}\otimes I_{24} )(|B_I\rangle\langle B_I|\otimes |B_I\rangle\langle B_I|))}{2^{2n}-1}\otimes I_{13}\nonumber\\&&+\frac{Tr_{13}((\text{SWAP}_{13}\otimes I_{24})(|B_I\rangle\langle B_I|\otimes |B_I\rangle\langle B_I|))-2^{-n}Tr_{13}(|B_I\rangle\langle B_I|\otimes |B_I\rangle\langle B_I|)}{2^{2n}-1}\otimes\text{SWAP}_{13}\nonumber\\&&=\frac{2^{-2n}}{2^{2n}-1}I_{1234}-\frac{2^{-3n}}{2^{2n}-1}I_{13}\otimes\text{SWAP}_{24}+\nonumber\\&&\frac{2^{-2n}}{2^{2n}-1}\text{SWAP}_{13}\otimes\text{SWAP}_{24}-\frac{2^{-3n}}{2^{2n}-1}\text{SWAP}_{13}\otimes I_{24}\nonumber,
\end{eqnarray}
where $\nu$ is a unitary 2-design and the four $n$-qubit systems in $|B\rangle |B\rangle$ are labeled as $1$, $2$, $3$, and $4$ in order. We can compare this result with the symmetric subspace projector (the definition of state 2-design) in terms of the trance distance:
\begin{eqnarray}
&&\left\|\int_{U_B\sim \nu} (U_B\otimes I_2)|B_I\rangle\langle B_I| (U_B^\dag\otimes I_4))^{\otimes 2}-\frac{I_{1234}+\text{SWAP}_{13}\otimes\text{SWAP}_{24}}{2\binom{2^{2n}+1}{2}}\right\|_1\nonumber\\&&=\left\|\frac{2^{2n+1}}{2^{8n}-2^{4n}}(I_{1234}+\text{SWAP}_{13}\otimes\text{SWAP}_{24})-\frac{2^{-3n}}{2^{2n}-1}I_{13}\otimes\text{SWAP}_{24}-\frac{2^{-3n}}{2^{2n}-1}\text{SWAP}_{13}\otimes I_{24}\right\|_1\nonumber\\&&\leq \frac{2^{2n+1}}{2^{4n}-1}+\frac{2^{2n+1}}{2^{6n}-2^{2n}}+\frac{1}{2^{2n}-1}+\frac{1}{2^{2n}-1}=\mathcal{O}(2^{-2n})
.\end{eqnarray}
Thus, $|B\rangle$ indeed forms an approximate state 2-design.
\end{proof}
\noindent Since 2-design is a sufficient condition for the anti-concentration effect \cite{hangleiter2018anticoncentration}, the amplitudes under this setting are on average of order $\mathcal{O}(2^{-2n})$.

\subsection{Probing properties of low-temperature Gibbs states from high-temperature ones \label{ape10}}
An intriguing implication of the state matrixization framework is that it allows us to investigate certain properties of low-temperature Gibbs states using high-temperature ones. The motivation is that $|A\rangle$ and $\rho_A$ are essentially different states, thus, there may exist a complexity separation in terms of their preparations. Indeed, if $\rho_A$ is prepared by CBE, we have shown its preparation cost is at the same level as $|A\rangle$. However, for Gibbs state, we can prepare $\rho_A$ without CBE. Since an $n$-qubit Gibbs state $\rho_\beta=\frac{e^{-\beta H}}{\Tr(e^{-\beta H})}$ is exactly a DMSE of the purified Gibbs state of $\rho_{2\beta}$. Now suppose we want to probe properties such as $\langle B|\rho_\beta\rangle$ with $|\rho_\beta\rangle$ denotes the normalized vectorized state from $\rho_\beta$, in the traditional method, we need to actually prepare $|\rho_\beta\rangle$ which corresponds to the preparation of the Gibbs state at the temperature $1/(2\beta)$ which can be hard for quantum computers \cite{lucas2014ising,kempe2006complexity}, in contrast, using our method, we only need to prepare a Gibbs state at the higher temperature $1/\beta$ which can be easy for quantum computers \cite{bakshi2024high}. 

Here, we give an example where the query complexity of estimating $\langle B|\rho_\beta\rangle$ is comparable for both methods but there can be an exponential separation on the preparation complexity between $\rho_\beta$ and $\rho_{2\beta}$. 

If we have a $n$-qubit Gibbs state $\rho_\beta=\frac{e^{-\beta H}}{\Tr(e^{-\beta H})}$ at the temperature $1/\beta$, then its vectorization is an un-normalized purified Gibbs state:
\begin{eqnarray}
\mathcal{V}[\rho_\beta]=\frac{e^{-\beta  \ket{\ket{H}}}}{\Tr(e^{-\beta H })}=\frac{\sqrt{\Tr(e^{-2\beta H})}}{\Tr(e^{-\beta H })}|\rho_\beta\rangle
,\end{eqnarray}
where we have:
\begin{eqnarray}
Tr_l(|\rho_\beta\rangle\langle \rho_\beta|)=\rho_{2\beta}
,\end{eqnarray}
with $Tr_l(\cdot)$ denotes the partial trace on the LS. Thus, $\rho_\beta$ itself is exactly a DMSE of the purified Gibbs state of $\rho_{2\beta}$ with no need of adding ancilla qubits. Now suppose we want to probe properties such as $\langle B|\rho_\beta\rangle$, in the traditional method, we need to actually prepare $|\rho_\beta\rangle$ which corresponds to the preparation of the Gibbs state at the temperature $1/(2\beta)$, in contrast, using our method, we only need to prepare a Gibbs state at the higher temperature $1/\beta$. In this case, we have $\gamma=\frac{\sqrt{\Tr(e^{-2\beta H})}}{\Tr(e^{-\beta H })}$, if $|B\rangle$ is further a maximally entangled state, we then have $\gamma\lambda=2^{n/2}\frac{\sqrt{\Tr(e^{-2\beta H})}}{\Tr(e^{-\beta H })}$.

We can consider an interesting case where $\rho_\beta$ has the spectral $\{p_0,p_1,p_1,... ,p_1\}$ with $p_0=2^{-n/2}$ and $p_1=(1-2^{-n/2})/(2^n-1)$. Then $\rho_{2\beta}$ has the spectra $\{p'_0,p'_1,p'_1,... ,p'_1\}$ with:
\begin{eqnarray}
p'_0&&= \frac{2^{-n}(2^n-1)}{2^{-n}(2^n-1)+(1-2^{-n/2})^2}=\mathcal{O}(1),\nonumber\\
p'_1&&=\frac{(1-2^{-n/2})^2}{2^{-n}(2^n-1)^2+(2^n-1)(1-2^{-n/2})^2}=\mathcal{O}(2^{-n}).
\end{eqnarray}
In this case, we have:
\begin{eqnarray}
\gamma\lambda=2^{n/2}\frac{\sqrt{\Tr(e^{-2\beta H})}}{\Tr(e^{-\beta H })}=2^{n/2}\frac{\sqrt{\frac{e^{-2\beta E_0}}{p_0'}}}{\frac{e^{-\beta E_0}}{p_0}}=\frac{\sqrt{(2^n-1)+2^n(1-2^{-n/2})^2}}{\sqrt{2^n-1}}=\mathcal{O}(1)
.\end{eqnarray}
Thus, in terms of the query complexity of estimating $\langle B|\rho_\beta\rangle$, we are comparable with traditional direct measurement. However, the preparation complexity can have an exponential separation between $\rho_\beta$ and $\rho_{2\beta}$. The reason is that in $\rho_{2\beta}$, the ground state population is of order $\mathcal{O}(1)$ which means the preparation of $\rho_{2\beta}$ has the same complexity as the preparation of the ground state of $H$ which should require an exponential amount of time when estimating the ground energy of $H$ is an NP-hard instance \cite{lucas2014ising} or a QMA-hard instance \cite{kempe2006complexity}. On the other hand, in $\rho_\beta$, the ground state population is still exponentially small, thus, the preparation of $\rho_\beta$ cannot be modified to the preparation of the ground state. Therefore, we may expect the existence of a polynomial-time preparation of $\rho_\beta$. Also, since estimating of $\langle B|\rho_\beta\rangle$ doesn't correspond to estimating the ground energy, the whole setup has no violations of any complexity beliefs.

\end{appendix}
\end{document}